\numberwithin{equation}{section}
\newtheorem{theorem}{Theorem}
\newtheorem{conjecture}{Conjecture}
\newtheorem{lemma}{Lemma}
\newcommand{\ket}[1]{\left\vert#1\right\rangle}
\newcommand{\bra}[1]{\left\langle#1\right\vert}
\newcommand{\A}[1]{#1_A}
\newcommand{\beq}{\begin{equation}}
\newcommand{\eeq}{\end{equation}}
\newcommand{\baq}{\begin{eqnarray}}
\newcommand{\eaq}{\end{eqnarray}}
\newcommand{\brac}[1]{\lbrace #1 \rbrace}
\newcommand{\p}{\mathtt{p}}
\def\ket#1{| #1 \rangle}
\def\bra#1{\langle #1 |}
\def\A{\mathcal{A}}
\def\D{\mathcal{D}}
\def\E{\mathcal{E}}
\def\L{\mathcal{L}}
\def\M{\mathcal{M}}
\begin{document}
	
	\title{On the minimum output entropy of single-mode phase-insensitive Gaussian channels}
		%with an input entropy constraint}
	\author{Haoyu~Qi, Mark~M.~Wilde, and~Saikat~Guha
\thanks{Haoyu Qi (hqi2@lsu.edu) and Mark M. Wilde (mwilde@lsu.edu) are with the Hearne Institute for Theoretical Physics, Department of Physics and Astronomy, Louisiana State University,  Baton Rouge, Louisiana 70803, USA. Mark M.~Wilde is also with the Center for Computation and Technology at LSU. Saikat Guha (sguha@bbn.com) is with the Quantum Information Processing (QuIP) group at Raytheon BBN Technologies.}}

	\maketitle
	\begin{abstract}
%Shannon's entropy power inequality (EPI) is the statement that $e^{2h(X+Y)} \ge e^{2h(X)} + e^{2h(Y)}$, for independent continuous random variables $X$ and $Y$, with $h(X)$ the differential entropy of $X$. This inequality has found many applications in Gaussian channel coding theorems. Equality holds iff both $X$ and $Y$ are Gaussian. For all the major applications of the EPI, it suffices to restrict $Y$ to be Gaussian. The statement of this restricted EPI is that, given a lower bound on $h(X)$, where $X$ is input to the Gaussian noise channel $Z = X+Y$, a Gaussian input $X$ minimizes $h(Z)$, the output entropy. The conjectured Entropy Photon-number Inequality (EPnI) takes on a role analogous to Shannon's EPI in proving coding theorem converses involving quantum limits of classical communication over bosonic channels. Similar to the classical case, a restricted version of the EPnI suffices for all its applications, where one of the two states is restricted to be a thermal state. The statement of this restricted EPnI is that, given a lower bound on the von Neumann entropy of the input to an $n$-mode lossy thermal-noise bosonic channel, an $n$-mode product thermal state input minimizes the output entropy. 
Recently de Palma \textit{et al.} [\textit{IEEE Trans.~Inf.~Theory} 63, 728 (2017)] proved---using Lagrange multiplier techniques---that under a non-zero input entropy constraint, a thermal state input minimizes the output entropy of a pure-loss bosonic channel. In this note, we present our attempt to generalize this result to all single-mode gauge-covariant Gaussian channels by using similar techniques. Unlike the case of the pure-loss channel, we cannot prove that the thermal input state is the only local extremum of the optimization problem. %It is unclear to us why the same techniques do not lead to a proof for amplifier channels. 
However, we do prove that, if the conjecture holds for gauge-covariant Gaussian channels, it would also hold for gauge-contravariant Gaussian channels. The truth of the latter leads to a solution of the triple trade-off and broadcast capacities of quantum-limited amplifier channels. We note that de Palma \textit{et al.} [\textit{Phys.~Rev.~Lett.} 118, 160503 (2017)] have now proven the conjecture for all single-mode gauge-covariant Gaussian channels by employing a different approach from what we outline here. Proving a multi-mode generalization of de Palma {\em et al.}'s above mentioned result---i.e., given a lower bound on the von Neumann entropy of the input to an $n$-mode lossy thermal-noise bosonic channel, an $n$-mode product thermal state input minimizes the output entropy---will establish an important special case of the conjectured Entropy Photon-number Inequality (EPnI). The EPnI, if proven true, would take on a role analogous to Shannon's EPI in proving coding theorem converses involving quantum limits of classical communication over bosonic channels.

% However, due to an additivity issue, several bosonic-channel coding-theorem proofs require a stronger multi-channel-use ($n>1$) version of the restricted version of the EPnI we prove in this paper, a proof of which remains open. 
	\end{abstract}

\section{Introduction}\label{sec:intro}
Finding the input to a channel---a probability distribution function for the case of classical channels and a density operator for the case of quantum channels---that minimizes the entropy of the channel output, is central to characterizing the fundamental limits of the channel's ability to transmit information reliably. For classical additive Gaussian noise channels, it is known that, subject to an input entropy lower bound, a Gaussian-distributed input minimizes the output entropy. An analogous statement has been conjectured for bosonic Gaussian channels (BGCs)---quantum channels that map Gaussian states of the input bosonic mode(s) to Gaussian states at the output. The conjecture states that, subject to a lower bound on the von Neumann entropy of the input state to a BGC, a thermal state input (one that has a circularly-symmetric complex Gaussian distribution in phase space) minimizes the entropy at the output of the channel. This conjecture is intimately tied to quantifying the communication capacity of bosonic channels, as discussed in Section~\ref{sec:background}. Various special cases of this conjecture, bounds, and related theorems have been proved in recent years~\cite{giovannetti2014ultimate,mari2014quantum,giovannetti2010generalized,giovannetti2015solution,de2015passive,de2016gaussian}. Recently, the minimum output entropy of non-Gaussian quantum channels have also been considered~\cite{memarzadeh2016nonGaussian}.

In this note, we present our attempt at proving the following conjecture:

%Our main result in this paper is the following theorem, which proves the aforesaid Gaussian optimizer conjecture for all single-mode phase-insensitive BGCs, which are sufficient to model linear loss, additive thermal noise, linear amplification, phase conjugation, and any combinations thereof as special cases.
\begin{conjecture}\label{thm:mainresult}
Consider a gauge-covariant bosonic Gaussian channel ${\cal N}_{A \to B}$, a trace-preserving completely-positive (TPCP) map that maps a quantum state $\rho_A$ of a single bosonic mode at the input of the channel to the single-mode state $\rho_B = {\cal N}_{A \to B}(\rho_A)$ at its output. For all input states $\rho_A$ with $S(\rho_A) \equiv S_0 > 0$, the output entropy $S(\rho_B)$ is minimized if $\rho_A$ is a thermal state of mean photon number $g^{-1}(S_0)$, where the entropy of a thermal state $\rho^{\rm th}_{\bar n}$ of mean photon number $\bar n$ is given by, $S(\rho^{\rm th}_{\bar n}) \equiv g({\bar n}) = ({\bar n}+1)\ln({\bar n}+1)-{\bar n}\ln {\bar n}$.
\end{conjecture}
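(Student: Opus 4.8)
The plan is to follow the Lagrange multiplier approach of de Palma \textit{et al.}, recast for a general gauge-covariant Gaussian channel $\N_{A\to B}$. First I would use the structure theorem for single-mode phase-insensitive Gaussian channels: every such $\N$ is (up to input/output phase-space rotations, which do not affect entropies) a composition of a pure-loss channel, a quantum-limited amplifier, and possibly an added-noise step, so it suffices to treat the two canonical families --- attenuators $\N_{\eta,N}$ and amplifiers $\N_{\kappa,N}$ with thermal environment photon number $N$. The optimization problem is to minimize $S(\N(\rho_A))$ over density operators $\rho_A\ge 0$, $\Tr\rho_A=1$, subject to $S(\rho_A)=S_0$. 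I would introduce Lagrange multipliers $\mu$ for the trace constraint and $\beta$ for the entropy constraint and write the stationarity condition for the Lagrangian $\L(\rho_A)=S(\N(\rho_A))-\beta S(\rho_A)-\mu\Tr\rho_A$, using that the Fr\'echet derivative of $\rho\mapsto S(\rho)$ is $-(\ln\rho+\unit)$ and the chain rule through the adjoint channel $\N^\dagger$.

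The key steps, in order, would be: (i) derive the Euler--Lagrange equation, which after the chain rule reads $\N^\dagger\!\big(\ln \N(\rho_A)\big) = -\beta\ln\rho_A + c\,\unit$ for some constant $c$ absorbing $\mu$ and the identity terms; (ii) verify that a thermal input $\rho_A=\rho^{\rm th}_{\bar n}$ with $\bar n=g^{-1}(S_0)$ is a solution --- this uses the fact that gauge-covariant channels map thermal states to thermal states, so $\N(\rho^{\rm th}_{\bar n})=\rho^{\rm th}_{\bar m}$ with $\bar m$ an affine function of $\bar n$ determined by the channel parameters, and that $\ln\rho^{\rm th}_{\bar n}$ is an affine function of the number operator $\hat n$, while $\N^\dagger$ maps (affine functions of) $\hat n$ to affine functions of $\hat n$; matching coefficients fixes $\beta$ in terms of $\bar n$ and the channel parameters; (iii) argue that any stationary point must be diagonal in the number basis and phase-covariant, by exploiting the phase symmetry $U(\theta)\,\rho_A\,U(\theta)^\dagger$ of the constraint set and the objective (the same symmetrization trick used for the pure-loss case), reducing the problem to classical probability vectors $\{p_k\}$; (iv) attempt to show the reduced stationarity conditions, a countable family of transcendental equations in $\{p_k\}$, have the geometric (thermal) distribution as their unique solution, and finally check the second-order/convexity condition to confirm it is a minimum rather than a saddle or maximum.

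The hard part will be step (iv): establishing \emph{uniqueness} of the stationary point. For the pure-loss channel the adjoint has a simple enough action that de Palma \textit{et al.} could show the only number-diagonal solution is thermal; for a general gauge-covariant channel the map $\N^\dagger(\ln\N(\cdot))$ acting on diagonal states involves the full transition-probability kernel of the Gaussian channel (e.g.\ the negative-binomial-type weights for the amplifier, or convolutions thereof for added-noise channels), and the resulting fixed-point equation for $\{p_k\}$ does not obviously decouple. I expect that I will be able to show the thermal state is \emph{a} local extremum and to compute the associated multiplier, but I anticipate being unable to rule out other local extrema by these techniques alone --- which is precisely the gap the abstract flags. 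As a partial result I would then pivot to the reduction argument: assuming the conjecture holds for gauge-covariant channels, use the fact that a gauge-contravariant channel can be written as a gauge-covariant channel followed (or preceded) by a transposition/phase-conjugation that preserves von Neumann entropy and maps thermal states to thermal states, thereby transferring the minimizer and concluding the gauge-contravariant case; this in turn feeds into the triple trade-off and broadcast capacity applications for quantum-limited amplifiers.
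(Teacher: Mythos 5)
Your plan is essentially the route the paper itself takes---reduction to number-diagonal inputs, a Lagrange-multiplier stationarity analysis, verification that the thermal state is a critical point, an anticipated failure to establish uniqueness for channels with an amplifying component, and a fallback reduction of the gauge-contravariant case to the gauge-covariant one---and you correctly predict the exact place where it stalls. In the paper's analysis the obstruction in your step (iv) is made concrete: for $\gamma_+>0$ the recursion for the ratios $z_n=p_{n+1}/p_n$ admits, besides the constant (geometric, i.e.\ thermal) solution, a strictly decreasing solution with $z_n\to 0$, a passive state with super-exponentially decaying Fock probabilities, which cannot be excluded because the limit in Eq.~\eqref{eq:ratio-limit} fails when $z=0$. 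So the gap you flag is real and is the same one; the statement remains a conjecture by this method (it was later proven by de Palma \emph{et al.}~\cite{PTG17} by entirely different means).

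Two concrete points where your plan as written needs repair. First, step (iii): phase symmetrization does not reduce the problem to number-diagonal inputs here, because averaging over $U(\theta)\rho U(\theta)^\dagger$ \emph{increases} the input entropy as well as the output entropy, so it does not preserve the constraint $S(\rho_A)=S_0$ in the useful direction. The correct tool is the Fock-rearrangement/passivity theorem (Theorem V.3 of \cite{de2015passive}), which replaces $\rho$ by the passive state $\rho^{\downarrow}$ with the \emph{same spectrum} (hence the same input entropy) and no larger output entropy; one must also separately rule out finitely supported passive states, which the paper does by showing the entropy derivative diverges for them when $\gamma_+>0$. Second, writing the Euler--Lagrange equation directly for the finite-time channel via $\N^\dagger(\ln\N(\rho))$ produces, as you yourself note, a fixed-point equation involving the full transition kernel that does not decouple. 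The paper avoids this by first using the semigroup structure $\Phi_t=e^{t\L}$ to reduce the finite-time statement to an infinitesimal one at $t=0$ and iterating that bound along the flow, so that the stationarity conditions involve only the Lindbladian and collapse to a tractable three-term recursion in the $z_n$. Without that reduction even the partial classification of critical points is out of reach; with it, you land exactly where the paper does.
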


Since the original arXiv posting of this note,  de Palma \textit{et al.}~have now proven that Conjecture~\ref{thm:mainresult} is true \cite{PTG17}, by employing a completely different approach from what we discuss here.
	
Conjecture~\ref{thm:mainresult}  is a generalization of the result proven in~\cite{giovannetti2015solution}, which provided a proof of Conjecture~\ref{thm:mainresult} when no lower-bound constraint was imposed on the input entropy, i.e., $S_0 = 0$. As we discuss next in Section~\ref{sec:background}, the $S_0 > 0$ case of this conjecture is required for applications to coding-theorem converse proofs for various problems involving communication over bosonic channels. However, the multi-mode version of Conjecture~\ref{thm:mainresult} is required in some cases, as discussed in Section~\ref{sec:background}, whose proof remains open.

It is well known that the thermal state is a local minimum of the optimization problem
set out in Conjecture~\ref{thm:mainresult}. However, the real difficult part of this sort of proof is to show that the thermal state is in fact a global minimum.  In Ref.~\cite{de2016gaussian}, de Palma \textit{et al.}~proved that the thermal state is the only critical point of the optimization function
for single-mode pure-loss BGCs, thus establishing Conjecture~\ref{thm:mainresult} for single-mode pure-loss BGCs. However, when we try to generalize their approach to general gauge-covariant BGCs (when there is an amplification component), we find that there is another critical point: a particular Fock state,  other than the thermal state solution. It is unclear to us whether this Fock state is indeed another local extremum with higher output entropy than that of a thermal state or if it can be eliminated as an invalid solution by imposing some physical conditions that we are unable to identify.

This paper is organized as follows. We begin with a discussion in Section~\ref{sec:background} on the relevance of the conjecture in the context of quantifying the quantum-limited capacity of single and multi-user optical communication channels and related results. In Section~\ref{sec:setup}, we give the definition of a gauge-covariant BGC and their Lindblad operator representation from \cite{giovannetti2010generalized}. In Section~\ref{sec:Th1}, we begin our proof attempt by reducing the aforesaid problem to an infinitesimal version thereof. In Section \ref{sec:th2}, we try to prove this infinitesimal version by solving a Lagrangian multiplier problem. We show that, unlike the case of pure-loss channels, there exists a solution to the Lagrangian equation other than the thermal state. In Section~\ref{sec:contra}, we prove that if the conjecture is true for gauge-covariant BGCs, it can be extended to gauge-contravariant BGCs. We conclude the paper in Section~\ref{sec:conclusions} with a summary and some open problems.

	\section{Background}\label{sec:background}
The entropy power inequality (EPI) for statistically independent continuous-valued random variables $X$ and $Y$,
\begin{equation}
e^{2h(X+Y)} \ge e^{2h(X)} + e^{2h(Y)},
\end{equation}
was proposed by Shannon \cite{bell1948shannon}, where $h(X)$ is the differential entropy of $X$. Equality holds when $X$ and $Y$ are both Gaussian distributed. The EPI has found many applications in capacity theorem converse proofs for additive Gaussian noise channels in both single and multi-user scenarios. In all those applications, it suffices to restrict $Y$ to be Gaussian distributed. With $Y$ Gaussian and $\sigma_Y^2$ its variance, the EPI can be restated as
\begin{equation}
h(X+Y) \ge \frac{1}{2}\log\left(2\pi e \left[\sigma_Y^2 + \sigma_{X_G}^2\right]\right),
\label{eq:MOE_classical}
\end{equation}
where $\sigma_{X_G}^2 = \frac{e^{2h(X)}}{2\pi e} \equiv v(X)$ is the `entropy power' of $X$, i.e., the entropy of a Gaussian random variable $X_G$ whose entropy is the same as the entropy of $X$. One interpretation of~\eqref{eq:MOE_classical} is that, subject to a lower bound on the entropy of the input, i.e., $h(X) \ge H_0$, the output entropy of an additive white Gaussian noise (AWGN) channel is minimized when $X$ has a Gaussian distribution. It is this special-case interpretation of the EPI---that subject to an input entropy constraint, Gaussian inputs minimize the output entropy---that is crucial to all the known applications of the EPI to the coding theorem converse proofs. Intuitively, choosing an input distribution that minimizes the entropy at the output of a channel is the input that works with the channel's intrinsic noise in the most favorable way, with respect to the achievable and reliable communication rate.

Fundamental limits of optical communications are governed by quantum information theory (QIT), as optical channels are bosonic quantum channels at the core. Choices of transmitter modulation and receiver measurement induce a specific classical channel whose capacity is governed by its Shannon capacity. QIT gives us tools to evaluate the ultimate capacity of quantum channels without making restrictive assumptions on modulation states or structural assumptions on the receiver. 

Bosonic Gaussian channels (BGCs) are quantum channels that map Gaussian states at the input to Gaussian states at the output. Most practical scenarios in optical communications in fiber and free-space where the predominant effects are pure (linear) loss, thermal noise (resulting from the blackbody background for instance), and linear amplification, are modeled accurately by an appropriate BGC. Non-linear effects in optical propagation such as self Kerr, non-linear dispersion, and multi-photon absorption cannot be modeled by a BGC. In this paper, we will restrict our attention to phase-insensitive BGCs---both gauge-covariant and gauge-contravariant kinds (to be defined later in the paper), which subsume pure loss, additive thermal noise, linear amplification, and all combinations thereof.

We now recall some history concerning Conjecture~\ref{thm:mainresult}. The entropy photon-number inequality (EPnI) was proposed as a conjecture in 2008~\cite{guhaEPnI2008,guha2008entropy}.
Provided that it is true, it would 
 on a role exactly analogous to Shannon's EPI in proving coding theorem converses for quantifying the ultimate capacity of transmitting information over BGCs. The single-mode ($n=1$) version of the EPnI is stated in terms of the entropy powers of a pair of input states $\rho_X$ and $\rho_Y$ mixing on a beamsplitter of transmissivity $\eta$, producing an output state $\rho_Z$ (see Fig.~\ref{fig:1}), where the quantum generalization of entropy power (or, entropy photon-number) of $\rho_X$ is the power (mean photon number) of a thermal state with (von Neumann) entropy equal to $S(\rho_X) = -{\rm Tr}(\rho_X \ln \rho_X)$, the entropy of $\rho_X$. The entropy of a thermal state $\rho^{\rm th}_{\bar n}$ of mean photon number $\bar n$ is given by
\begin{equation}
S(\rho^{\rm th}_{\bar n}) = g({\bar n}) = ({\bar n}+1)\ln({\bar n}+1)-{\bar n}\ln {\bar n}.
\end{equation}
Therefore, the entropy photon number of $\rho_X$ is $g^{-1}(S(\rho_X))$. The statement of the EPnI is
\begin{equation}
g^{-1}(S_Z) \ge \eta\, g^{-1}(S_X) + (1-\eta)\,g^{-1}(S_Y),
\end{equation}
where $S_A = S(\rho_A)$; $A = X, Y$ or $Z$. The multi-mode ($n>1$) version of this conjecture is stated similarly~\cite{guhaEPnI2008}, but with entropy photon number of an $n$-mode input state $\rho_{X^n}$ defined as the mean photon number of an $n$-mode i.i.d. tensor-product thermal state with entropy $S(\rho_{X^n})$, which is given by $g^{-1}(S(\rho_{X^n})/n)$.

The restriction of the EPnI where one of the two inputs is held to be a thermal state (tensor-product thermal state for the multi-mode case) was stated in 2008 in~\cite{guha2008multiple} as Conjecture~3. The multi-mode ($n > 1$) version of Conjecture~3 suffices to prove all the coding theorem converses where the EPnI has been employed thus far. Specifically, a proof of Conjecture~3 would establish the quantum-limited classical capacity region of a single-sender multiple-receiver bosonic broadcast channel with loss and additive thermal-noise~\cite{guha2008multiple}.

A special case of Conjecture~3 of~\cite{guha2008multiple}, corresponding to $S_X = 0$ and ${\bar n} > 0$, stated as Conjecture 1 in~\cite{guha2008multiple} was conjectured in 2004~\cite{giovannetti2004minimum}, a proof of which was shown to establish that the capacity of the single-sender single-receiver lossy BGC with additive thermal-noise is attained by coherent-state inputs~\cite{giovannetti2004minimum}. Conjecture 1 states that the vacuum input state minimizes the output entropy of a lossy thermal-noise BGC. The proof of this conjecture for all multi-mode phase-insensitive BGCs appeared in 2014~\cite{mari2014quantum,giovannetti2014ultimate}, which used the fact that a quantum-limited conjugate amplifier channel is entanglement-breaking, and employed certain decomposition rules of a bosonic Gaussian channel. 

Another special case of Conjecture~3 of~\cite{guha2008multiple}, corresponding to $S_X > 0$ and ${\bar n} = 0$, stated as Conjecture 2 in~\cite{guha2008multiple}, was stated in 2007~\cite{guha2007classical}, a proof of which was shown to establish that the capacity of the single-sender multiple-receiver pure-loss BGC is attained by coherent-state inputs~\cite{guha2007classical}, and would complete the proof of the triple tradeoff region of a pure-loss BGC~\cite{2012_wilde_bosonictripletradeoff}. Conjecture 2 of~\cite{guha2008multiple} states that given an entropy constraint on the input of a lossy channel with zero added thermal noise, the output entropy is minimized by a thermal input state. A major step towards the proof of Conjecture 2 of~\cite{guha2008multiple} was made in 2015 in~\cite{de2015passive}, where it was shown that for all single-mode BGCs, a Fock-passive state (defined later in the paper) minimizes the output entropy among the set of all input states of a given eigen-spectrum. This result reduced an optimization over quantum states to an optimization of a (classical) probability mass function. More recently, Conjecture 2 of~\cite{guha2008multiple} was proven for the single-mode ($n=1$) case by considering the aforesaid minimization problem within the class of passive states~\cite{de2016gaussian}. The general proof of the multi-mode ($n>1$) version of Conjecture 2 remains open.

We again clarify that de Palma \textit{et al.}~have now proven that Conjecture~\ref{thm:mainresult} is true \cite{PTG17}, by employing a completely different approach from what we discuss here.
Also, it was shown in \cite{QW16} that Conjecture~\ref{thm:mainresult} implies the solution of the trade-off and broadcast capacities of quantum-limited amplifier channels, and so these latter problems are solved now in light of \cite{PTG17}. 

%So our results here imply that these problems are now solved and are reflected in an updated version of \cite{QW16}.

\begin{figure}[!t]
\centering
\includegraphics[width=\linewidth]{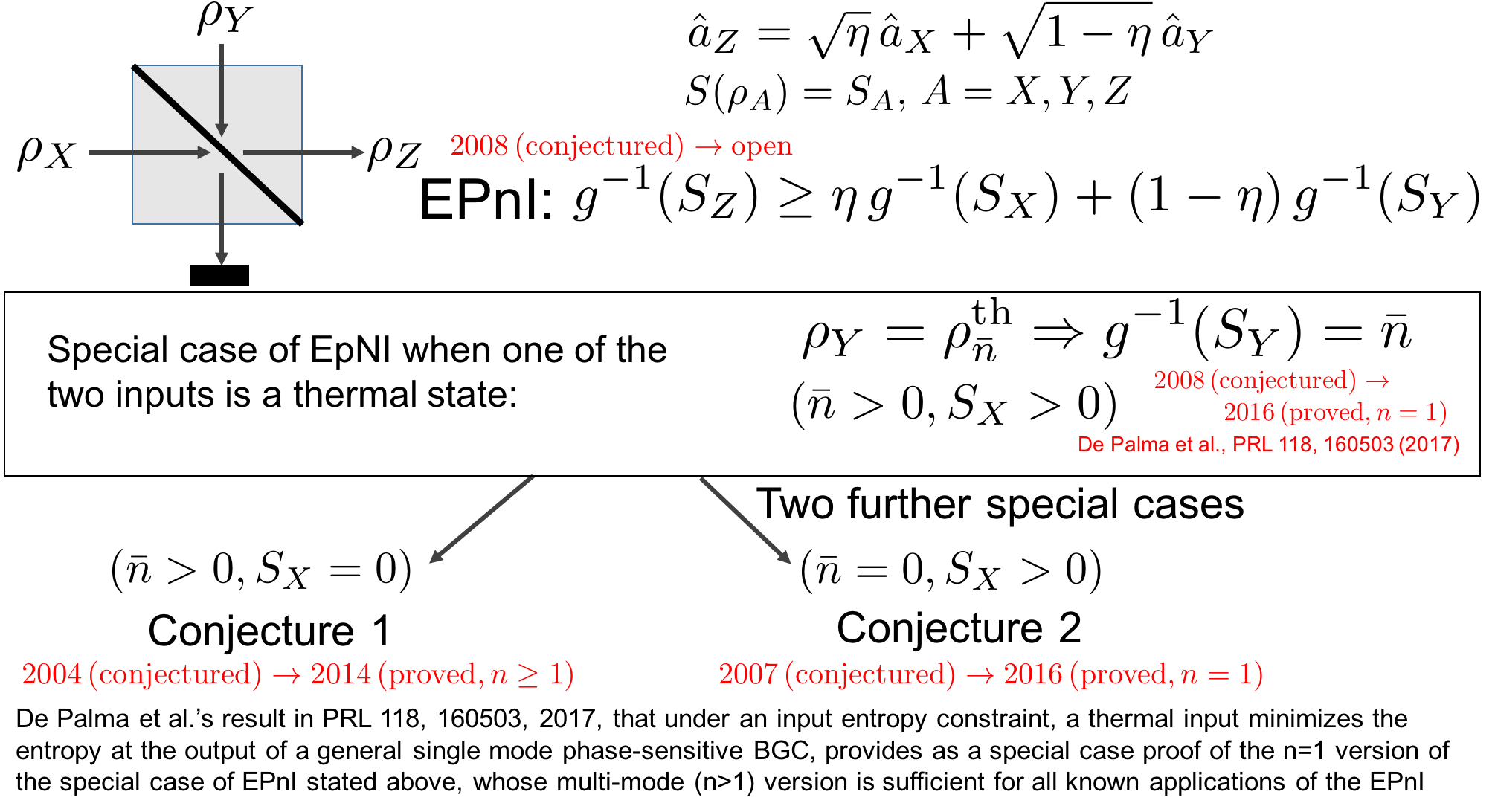}	
\caption{Entropy photon-number inequality and its special cases.}
\label{fig:1}
\end{figure}

%The main result of our paper stated in Theorem~\ref{thm:mainresult} establishes Conjecture~3 for all single-mode phase-insensitive BGCs, which includes as special cases the lossy thermal-noise channel (for which the conjecture was originally stated~\cite{guha2008multiple}), the noisy amplifier channel and the additive Gaussian noise channel. 

Fig.~\ref{fig:1} summarizes the relationship between the different conjectures, the cases for which they have been proven and those that remain open.
Proving the multi-mode multi-input version of the EPnI would represent our ultimate understanding of the problem of minimum output entropy for bosonic channels.

	\section{Set up and main results}
	\label{sec:setup}
Let us consider the Hilbert space of one mode of the electromagnetic field, a quantum harmonic oscillator. The annihilation operator $\hat{a}$ of a bosonic mode satisfies the following canonical commutation relation:
	\begin{align}
	[\hat{a},\hat{a}^\dagger] = 1,
	\end{align}
	from which we can build the Fock basis for the Hilbert space,
	\begin{align}
	\ket{n} = \frac{\hat{a}^{\dagger n}}{\sqrt{n!}}\ket{0},~~~\bra{m}n\rangle = \delta_{m,n}~.
	\end{align}
The Fock state $|n\rangle$, also known as the photon-number state, is a quantum state of a given mode corresponding to exactly $n$ photons being present in the mode. The annihilation operator and its conjugate (the creation operator) act on the Fock basis as
	\begin{align}
	\hat{a}\ket{n}&=\sqrt{n}\,\ket{n-1},\,{\text{and}}\\
	\hat{a}^\dagger\ket{n}&=\sqrt{n+1}\,\ket{n+1}.
	\end{align}
The general quantum state of a bosonic mode is a density operator $\rho$, an infinite-dimensional unit-trace positive semi-definite Hermitian operator. The Fock basis representation of $\rho$ (i.e., the infinitely-many matrix elements $\langle m | \rho |n \rangle$; $m,n = 0, 1, \ldots$) is equivalent to the characteristic function representation of $\rho$, $\chi_\rho(\xi) = {\rm Tr}(\rho\,e^{\xi {\hat a}^\dagger - \xi^* {\hat a}})$. A bosonic Gaussian channel (BGC) can be characterized by its action on the characteristic function of the input state. In particular, for one-mode gauge-covariant BGCs, the transformation of the input state $\rho_{\rm in}$ to the output state $\rho_{\rm out}$, expressed in terms of their characteristic functions takes the following form:
	\begin{align}
	\label{eq:characteristic}
	\chi_{\rho_{\rm out}}(\xi) = \chi_{\rho_{\rm in}}(\sqrt{\tau}\,\xi)\exp[-y|\xi|^2/2],
	\end{align}
	where $\tau >0$ is the loss or gain parameter and $y$ parametrizes added (Gaussian) noise. The BGC is a valid physical TPCP map if $y\geq|\tau-1|$. For a lossy bosonic channel of transmissivity $\eta \in (0, 1)$ and added thermal noise of mean photon number $N$, $\tau = \eta$ and $y = (1-\eta)(2N+1)$. For a phase-insensitive bosonic amplifier of gain $\kappa > 1$ and added thermal noise of mean photon number $N$, $\tau = \kappa$ and $y = (\kappa - 1)(2N+1)$. For a unit-gain additive-noise channel with photon-number-unit noise variance $N$, $\tau = 1$ and $y = 2N$. It was shown in~\cite{giovannetti2010generalized}, starting from the characteristic-function description, that all single-mode gauge-covariant BGCs possess a semi-group structure, and consequently that a gauge-covariant single-mode BGC can be represented as a one-parameter linear TPCP map,
	\begin{align}
	\rho(t)=\Phi_t(\rho) = e^{t\L}(\rho)~,
	\end{align}
	with manifest semi-group structure 
	\begin{align}
	e^{(t+t')\L} = e^{t'\L}e^{t\L}=e^{t\L}e^{t'\L}~.
	\end{align}
	Here $\L$ is a Lindblad operator that generates the dynamics of a gauge-covariant BGC, and $t$ can be viewed as a time parameter corresponding to a continuous action of the channel on the input state $\rho(0)$, resulting in the final state $\rho(t)$ at time $t > 0$. With that interpretation, the equation of motion for $\rho(t)$, under the action of the channel, is given by
	\begin{align}
	\frac{d\rho(t)}{dt} = \L(\rho(t))~.
	\end{align}
%	From the semi-group structure, it's not difficult to show
%	\begin{align}\label{eq:group-derivative}
%	\rho'(t) = e^{t\L}\rho'(0)~,
%	\end{align}
%	where $\rho'(t) = d\rho(t)/dt$.
	For gauge-covariant BGCs, the Lindblad operator is given by
	\begin{align}
	\L = \gamma_+ \L_+ + \gamma_-\L_-~,
	\end{align}
where
	\begin{align}
	\L_+(\rho) &= {\hat a}^\dagger\rho {\hat a} -\frac{1}{2}{\hat a}{\hat a}^\dagger\rho -\frac{1}{2} \rho {\hat a}{\hat a}^\dagger ~,\\
	\L_-(\rho) &= {\hat a}\rho {\hat a}^\dagger -\frac{1}{2}{\hat a}^\dagger {\hat a} \rho -\frac{1}{2} \rho {\hat a}^\dagger {\hat a}~.
	\end{align}
	For a lossy channel with thermal noise, $\gamma_+ = N,\gamma_- = N+1$, where $N$ is the mean photon number of the thermal state. The channel transmissivity $\eta = e^{-t}$. For a phase-insensitive noisy amplifier channel, $\gamma_+ = N+1,\gamma_- = N$, and the amplifier gain $\kappa = e^{t}$. Finally, for an additive Gaussian noise channel, $\gamma_+=\gamma_- = 1$ with $N=t$.
	
	The density operator of a thermal state with mean photon number $\bar n$ is diagonal in the Fock basis and  is given by 
	\begin{align}
	\rho^{\rm th}_{\bar n} &= \frac{1}{1+{\bar n}}\left(\frac{\bar n}{1+{\bar n}}\right)^{\hat{a}^\dagger\hat{a}}~,\\
	&= \sum_{n=0}^{\infty} \lambda_n |n\rangle \langle n|,
	\end{align}
where $\lambda_n = {\bar n}^n/(1+{\bar n})^{n+1}$ is the geometric (Bose-Einstein) distribution. The von Neumann entropy of the thermal state is  
	\begin{align}
	S(\rho^{\rm th}_{\bar n}) = g({\bar n}) = ({\bar n}+1)\ln({\bar n}+1)-{\bar n}\ln {\bar n}~,
	\end{align}
	which clearly is also the (discrete) Shannon entropy of the geometric distribution $\left\{\lambda_n\right\}$.
	
	We now restate and specialize 
	%the main result of our paper, stated in 
	Conjecture~\ref{thm:mainresult}, in terms of the aforementioned description of a gauge-covariant BGC (see Section~\ref{sec:contra} for our statement for gauge-contravariant channels).
	\begin{conjecture}\label{th:finite-conjecture}
	Consider a single-mode gauge-covariant BGC represented by $\Phi_t = e^{t\L}$. For any input state $\rho$ with a given entropy $S(\rho) = S_0 \geq 0$, the output entropy is minimized by the input thermal state, i.e.,
	\begin{align}\label{eq:finite-conjecture}
	S(\Phi_t(\rho))\geq S(\Phi_t(\rho^{\rm th}_{g^{-1}(S_0)}))~,~~\forall t\geq 0~.
	\end{align}
	\end{conjecture}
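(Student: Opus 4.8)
\emph{Proof plan.} The plan is to separate the problem into three layers. First, I would exploit the semigroup structure $\Phi_t=(\Phi_{t/M})^{M}$ together with the key fact that a gauge-covariant BGC sends thermal states to thermal states, $\Phi_s(\rho^{\rm th}_{\bar n})=\rho^{\rm th}_{h_s(\bar n)}$ for an explicit strictly increasing $h_s$, to reduce Conjecture~\ref{th:finite-conjecture} to an \emph{infinitesimal} version: it suffices to show that, for \emph{every} value of the entropy, the thermal state minimizes the instantaneous entropy-production rate
\begin{equation}
\dot S(\rho)\equiv\frac{d}{dt}S(\Phi_t(\rho))\Big|_{t=0}=-\Tr\!\big(\L(\rho)\ln\rho\big)
\end{equation}
among all input states of that entropy. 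Given this, chaining the bound across the $M$ elementary time-steps of $\Phi_t$---using that $s\mapsto S(\Phi_s(\rho^{\rm th}_{g^{-1}(s)}))$ is monotone in $s$ and letting $M\to\infty$---recovers \eqref{eq:finite-conjecture}. Second, I would invoke the theorem of de~Palma \emph{et al.}~\cite{de2015passive} that, among input states of a fixed spectrum, the output entropy is minimized by the corresponding passive state, so the infinitesimal optimization may be restricted to Fock-diagonal states; the problem then becomes \emph{classical}: minimize $\dot S$ over sequences $\{p_n\}_{n\ge0}$ with $p_0\ge p_1\ge\cdots\ge0$, $\sum_n p_n=1$, and $-\sum_n p_n\ln p_n=S_0$.

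Third, I would treat this classical problem by Lagrange multipliers. On a Fock-diagonal state, $\L_+(\rho)=\sum_n[np_{n-1}-(n+1)p_n]\,|n\rangle\langle n|$ and $\L_-(\rho)=\sum_n[(n+1)p_{n+1}-np_n]\,|n\rangle\langle n|$, so $\dot S=-\sum_n\big(\gamma_+[np_{n-1}-(n+1)p_n]+\gamma_-[(n+1)p_{n+1}-np_n]\big)\ln p_n$ is an explicit functional of $\{p_n\}$. Introducing a multiplier $\mu$ for the entropy constraint and $\nu$ for normalization and setting $\partial/\partial p_k$ of the Lagrangian to zero yields a second-order difference equation relating $\ln p_{k-1},\ln p_k,\ln p_{k+1}$ (together with the ratios $p_{k\pm1}/p_k$). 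One then checks by direct substitution that the geometric law $\ln p_n=-cn-d$ solves it, with $\mu$ and $\nu$ fixed by matching the coefficient of $k$ and the constant term; hence the thermal state is always a stationary point. If in addition one could prove it is the \emph{unique} admissible stationary point, then compactness of the constraint set, lower semicontinuity of $\dot S$, and the fact that the thermal state is a local minimum would together force it to be the global minimizer, exactly as in the pure-loss treatment of~\cite{de2016gaussian}.

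The hard part will be this uniqueness step, and it is where the approach stalls. For pure loss ($\gamma_+=0$) the stationarity condition collapses, after using the $k=0$ equation, to a first-order recursion that forces the geometric law; but once an amplifying component is present ($\gamma_+>0$) the genuinely second-order difference equation carries a second branch of solutions, which in the pertinent limit degenerates to a single Fock state $|m\rangle\langle m|$. It is not clear whether this extra critical point is a spurious boundary artifact, a saddle, or a true competing local minimum. The two things I would try in order to finish are: (i) a second-variation analysis of $\dot S$ at the Fock-state solution on the constraint manifold, aiming to show it is not a local minimum; and (ii) a search for a physical admissibility condition---the passivity ordering $p_0\ge p_1\ge\cdots$, complete positivity, or a suitable convexity of $\dot S$ in an appropriate parametrization---that excludes the Fock branch from the outset. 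Absent such an argument the claim must remain a conjecture here; as noted in the introduction, de~Palma \emph{et al.}~\cite{PTG17} later established it by an entirely different route.
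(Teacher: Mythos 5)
Your plan follows essentially the same route as the paper's own attempt: reduction to an infinitesimal statement via the semigroup property (Section~\ref{sec:Th1}), restriction to passive states via Theorem~V.3 of~\cite{de2015passive}, and a Lagrange-multiplier analysis of the entropy-production rate, stalling at exactly the same obstruction---the second, super-exponentially decaying branch of stationary points that appears once $\gamma_+>0$ and that neither you nor the paper can exclude. Since the paper likewise leaves the statement as a conjecture for precisely this reason (it was later proven in~\cite{PTG17} by a different method), your proposal is an accurate reconstruction of the paper's approach and of where it fails.
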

	
The goal of the remainder of this note is to discuss our attempt at a proof of Conjecture~\ref{th:finite-conjecture}. For the case when the input state $\rho$ is a pure state, i.e., $S_0=0$, the statement of Conjecture~\ref{th:finite-conjecture} was proved in~\cite{giovannetti2014ultimate}. We will try to prove it for the general case, i.e., $S_0>0$. In Section \ref{sec:Th1}, we will reduce the statement of Conjecture~\ref{th:finite-conjecture} to Conjecture~\ref{th:infinite-conjecture}, an equivalent infinitesimal version of Conjecture~\ref{th:finite-conjecture}, but stated only at $t=0$. Finally, in Section~\ref{sec:th2}, we will discuss our proof attempt for Conjecture~\ref{th:infinite-conjecture} using a Lagrange-multiplier optimization and show where we are unable to complete the proof.
	%%%%%%%%%%%%%%%%%%%%%%%%%%%%%%%%
	
	\section{An equivalent infinitesimal form}
	\label{sec:Th1}
	In this section, we will show that proving the following infinitesimal version of the minimum output entropy conjecture established in \cite{giovannetti2010generalized}, which we state entirely at $t=0$, implies a proof of Conjecture \ref{th:finite-conjecture} for all $t \ge 0$.
	\begin{conjecture}\label{th:infinite-conjecture}
	Consider a single-mode gauge-covariant BCG represented by $\Phi_t = e^{t\L}$. For any input state $\rho$ with entropy $S(\rho) = S_0 > 0$, we have 
	\begin{align}
	\label{eq:inf_conjecture}
	\frac{d}{dt}S(\Phi_t(\rho))|_{t=0}\geq \left. \frac{d}{dt}S\!\left(\Phi_t(\rho^{th}_{g^{-1}(S_0)})\right)\right|_{t=0}~.
	\end{align}
\end{conjecture}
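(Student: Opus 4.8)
The plan is to reduce Conjecture~\ref{th:infinite-conjecture} to a constrained minimization over classical probability distributions and then attack that minimization with Lagrange multipliers. First, since each $\Phi_t$ is trace preserving, differentiating $S(\Phi_t(\rho)) = -\Tr(\Phi_t(\rho)\ln\Phi_t(\rho))$ at $t=0$ and using $\frac{d}{dt}\Phi_t(\rho)|_{t=0} = \L(\rho)$ together with $\Tr(\L(\rho)) = 0$ gives the entropy-production rate
\[
\frac{d}{dt}S(\Phi_t(\rho))\Big|_{t=0} = -\Tr\!\big(\L(\rho)\ln\rho\big).
\]
Thus Conjecture~\ref{th:infinite-conjecture} is precisely the assertion that the thermal state $\rho^{\rm th}_{g^{-1}(S_0)}$ minimizes the functional $\rho\mapsto -\Tr(\L(\rho)\ln\rho)$ over all states with $S(\rho)=S_0$.

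Next I would restrict attention to states diagonal in the Fock basis. By the passive-state result of~\cite{de2015passive}, for any fixed input spectrum the passive state $\rho^{\downarrow}$ (Fock-diagonal with nonincreasing eigenvalues) minimizes $S(\Phi_t(\rho))$ for every $t\geq 0$; since all states with a common spectrum share $S(\Phi_0(\rho))=S_0$, the inequality $S(\Phi_t(\rho))\geq S(\Phi_t(\rho^{\downarrow}))$ is saturated at $t=0$, and differentiating there yields $\frac{d}{dt}S(\Phi_t(\rho))|_{t=0}\geq \frac{d}{dt}S(\Phi_t(\rho^{\downarrow}))|_{t=0}$. As $\rho^{\downarrow}$ is a passive state of entropy $S_0$ and the thermal state is itself passive, it therefore suffices to prove that the thermal state minimizes the entropy-production rate within the family of passive states of entropy $S_0$.

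For $\rho=\sum_n p_n\,\ket{n}\!\bra{n}$ the operators $\L_\pm(\rho)$ remain diagonal, and a summation-by-parts computation turns the entropy-production rate into the functional
\[
J[\{p_n\}] = \gamma_+\sum_n (n+1)\,p_n\ln\frac{p_n}{p_{n+1}} + \gamma_-\sum_n n\,p_n\ln\frac{p_n}{p_{n-1}}.
\]
I would then minimize $J$ over nonincreasing sequences with $\sum_n p_n=1$ and $-\sum_n p_n\ln p_n=S_0$, introducing Lagrange multipliers $\mu$ and $\lambda$ for the two equality constraints. Setting $\partial/\partial p_k=0$ produces a three-term stationarity recurrence linking $p_{k-1}$, $p_k$, and $p_{k+1}$. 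Substituting the geometric (thermal) law $p_n\propto r^n$, for which every ratio $p_n/p_{n\pm1}$ is constant, one checks that the $k$-dependence collapses and that a single choice of $(\mu,\lambda)$ solves the recurrence; hence the thermal state is a critical point for each $S_0>0$.

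The hard part will be to show that this critical point is the unique admissible one, equivalently the global minimizer. When $\gamma_+=0$ (pure loss) the $\L_+$ contribution is absent and the stationarity condition simplifies enough that the geometric law is its only admissible solution, which is how~\cite{de2016gaussian} completes the argument. Once a nonzero amplification component is present, however, $\gamma_+>0$ restores the full recurrence, which then carries a second solution alongside the thermal one---concretely a particular Fock state. I do not expect positivity, the monotone ordering of a passive state, or the fixed-entropy constraint to obviously exclude this extra critical point, and it is unclear whether its value of $J$ lies above the thermal value; this is exactly the gap that stalls the Lagrange-multiplier approach and that the different method of~\cite{PTG17} circumvents.
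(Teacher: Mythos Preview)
Your proposal is correct and follows essentially the same route as the paper: reduce to passive states via~\cite{de2015passive}, compute the entropy-production functional for Fock-diagonal inputs, set up the Lagrange-multiplier system, verify the thermal state is a critical point, and observe that for $\gamma_+>0$ a second admissible critical point survives, blocking the argument. The paper fills in a few technical steps you omit---it shows explicitly that finite-support passive states are never minimizers (the derivative diverges), and it analyzes the stationarity recurrence in terms of the ratios $z_n=p_{n+1}/p_n$, proving these are monotone and bounded and hence converge either to a constant in $(0,1)$ (thermal) or to $0$ (the spurious super-exponentially decaying passive state)---but the structure and the ultimate obstruction are exactly as you describe.
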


See Section \ref{sec:th2} for a discussion of our attempt at proving Conjecture~\ref{th:infinite-conjecture}.

%The truth of the minimum output entropy conjecture then directly follows from the truth of its infinitesimal version. 

\begin{lemma}
Conjecture~\ref{th:infinite-conjecture} implies Conjecture \ref{th:finite-conjecture}.
\end{lemma}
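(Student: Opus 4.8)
The plan is to convert the finite-time statement~\eqref{eq:finite-conjecture} into a comparison between an ordinary differential equation (for the thermal benchmark) and a differential inequality (for an arbitrary input), and then to close the argument with a Gr\"onwall-type comparison principle. Since the case $S_0=0$ is exactly the result of~\cite{giovannetti2014ultimate}, I would assume $S_0>0$. Fix an input state $\rho$ with $S(\rho)=S_0>0$, set $\rho(t)=\Phi_t(\rho)$ and $h(t)=S(\rho(t))$, and recall that a gauge-covariant BGC maps thermal states to thermal states, so that $\Phi_t(\rho^{\rm th}_{g^{-1}(S_0)})=\rho^{\rm th}_{\bar n(t)}$, where $\bar n(t)$ solves the linear equation $\dot{\bar n}=(\gamma_+-\gamma_-)\bar n+\gamma_+$ with $\bar n(0)=g^{-1}(S_0)$. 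Consequently $k(t):=S\!\left(\Phi_t(\rho^{\rm th}_{g^{-1}(S_0)})\right)=g(\bar n(t))$ is explicit, and I would introduce
\begin{align}
F(S):=\left.\frac{d}{ds}S\!\left(\Phi_s(\rho^{\rm th}_{g^{-1}(S)})\right)\right|_{s=0}=g'\!\left(g^{-1}(S)\right)\left[(\gamma_+-\gamma_-)\,g^{-1}(S)+\gamma_+\right],
\end{align}
a function that is smooth, hence locally Lipschitz, on $(0,\infty)$.

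The crux is the semigroup law $\Phi_{t+s}=\Phi_s\circ\Phi_t$. It gives $h(t+s)=S(\Phi_s(\rho(t)))$, so the right derivative of $h$ at $t$ equals $\frac{d}{ds}S(\Phi_s(\rho(t)))|_{s=0}$; applying Conjecture~\ref{th:infinite-conjecture} to the input state $\rho(t)$, whose entropy is $h(t)>0$, then yields $h'(t)\ge F(h(t))$. Running the same computation on the thermal trajectory, and using $\bar n(t)=g^{-1}(k(t))$, instead gives the \emph{identity} $k'(t)=F(k(t))$. Thus $k$ solves $y'=F(y)$, $y(0)=S_0$, while $h$ is a supersolution of the same equation with the same initial value. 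Since $F$ is locally Lipschitz on $(0,\infty)$ and both curves remain in a compact subinterval of $(0,\infty)$ on any finite interval $[0,T]$, the standard ODE comparison theorem (Gr\"onwall's inequality applied to $w(t)=\max\{k(t)-h(t),0\}$) gives $h(t)\ge k(t)$ on $[0,T]$; letting $T\to\infty$ yields~\eqref{eq:finite-conjecture} for all $t\ge0$, i.e.\ Conjecture~\ref{th:finite-conjecture}.

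This direction is essentially a soft argument, so I do not expect a serious obstacle; the only genuinely delicate points are bookkeeping. First, one must know that $h(t)=S(\rho(t))>0$ for all $t\ge0$, both so that Conjecture~\ref{th:infinite-conjecture} may legitimately be invoked at $\rho(t)$ and so that the trajectories stay where $F$ is Lipschitz. For channels with $\gamma_+>0$ this is immediate, since the channel output has full support on Fock space; for the pure-loss channel ($\gamma_+=0$) it follows from the fact that such a channel with $\eta\in(0,1)$ cannot map a mixed state to a pure one (a beamsplitter entangles the vacuum with every non-coherent pure input, and no two coherent states are orthogonal). Second, one needs the manipulations on $h$ — existence of the right derivative and its use in the Gr\"onwall step — to be justified at the level of regularity already employed when passing from Conjecture~\ref{th:finite-conjecture} to Conjecture~\ref{th:infinite-conjecture}, so nothing new is required there.
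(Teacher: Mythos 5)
Your argument is correct and is, at heart, the same strategy as the paper's: propagate the infinitesimal statement along the semigroup trajectory using $\Phi_{t+s}=\Phi_s\circ\Phi_t$. The difference is in the implementation. The paper discretizes $[0,t]$ into $N$ steps of length $\delta t=t/N$, applies Conjecture~\ref{th:infinite-conjecture} at each step to conclude that the thermal trajectory stays optimal, invokes concavity of the von Neumann entropy and linearity of $\Phi_t$ to replace the constraint $S(\rho(j\delta t))\geq S_{j\delta t}$ by an equality constraint, and then takes $N\to\infty$. You instead encode the whole iteration as the differential inequality $D^{+}h(t)\geq F(h(t))$ against the exact ODE $k'=F(k)$ solved by the thermal benchmark, and close with a Gr\"onwall comparison. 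This buys two things: the comparison theorem needs no saturation of the entropy constraint, so the concavity step disappears; and the passage to finite $t$ is rigorous, whereas the paper's $N\to\infty$ limit is asserted without controlling the accumulation of the per-step errors. In that sense your version is the cleaner one.

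Two bookkeeping points deserve to be made explicit rather than waved at. First, the comparison argument uses continuity of $t\mapsto S(\Phi_t(\rho))$ (von Neumann entropy is only lower semicontinuous in infinite dimensions, so some regularity of the trajectory must be invoked); the paper silently assumes the same. Second, you assert that both curves stay in a compact subinterval of $(0,\infty)$ on $[0,T]$, but you only justify $h(t)>0$ pointwise; since $F$ blows up as $S\to 0^{+}$ when $\gamma_+>0$, you need a \emph{uniform} positive lower bound on $h$ over $[0,T]$ to extract a single Lipschitz constant for the Gr\"onwall step. This is easy to supply (e.g., for $\gamma_+>0$ the inequality $D^{+}h\geq F(h)$ itself prevents $h$ from approaching $0$, and for the pure-loss case $F$ extends Lipschitz-continuously to $S=0$), but it should be said.
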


\begin{proof}
Let us assume that the statement of Conjecture~\ref{th:infinite-conjecture}, Eq.~\eqref{eq:inf_conjecture} is true. Due to the semi-group property of the map $\Phi_t = e^{t\L}$, we can discretize it as a concatenation of $N$ evolutions:
\begin{align}
e^{t\L} = \prod_{j=1}^{N}e^{j (\delta t) \L}~,
\end{align} 
where $\delta t = t/N$.

$\Phi_0(\rho) = \rho$ is the identity map. Therefore, the input entropy constraint $S(\rho) = S_0$ can be restated as
\begin{align}\label{eq:initial}
S(\Phi_0(\rho)) = S(\Phi_0(\rho^{\rm th}_{g^{-1}(S_0)}))~.
\end{align}
Now consider the output entropy at $t = \delta t$.  From Conjecture \ref{th:infinite-conjecture} and the initial condition (\ref{eq:initial}), for $\delta t$ infinitesimally small, we must have that the thermal state minimizes the output entropy, 
\begin{align}
S(\Phi_{\delta t}(\rho))\geq S(\Phi_{\delta t}(\rho^{\rm th}_{g^{-1}(S_0)}))\equiv S_{\delta t}~.
\end{align}
Now to minimize the entropy of the output state at $t'=2(\delta t)$, we need to consider the following minimization problem:
\begin{align}
\min_{S(\rho(\delta t))\geq S_{\delta t}}S(\Phi_{\delta t}(\rho(\delta t)))~.
\label{eq:minatdeltat}
\end{align}
Due to the concavity of von Neumann entropy and the linearity of $\Phi_t$, we can restrict the minimization in~\eqref{eq:minatdeltat} to a smaller set $S(\rho(\delta t))= S_{\delta t}$ without affecting the minimum~\cite{giovannetti2010generalized},
\begin{align}
\min_{S(\rho(\delta t))\geq S_{\delta t}}S(\Phi_{\delta t}(\rho(\delta t))) = 
\min_{S(\rho(\delta t))= S_{\delta t}}S(\Phi_{\delta t}(\rho(\delta t)))~.
\end{align}
Using Conjecture~\ref{th:infinite-conjecture} again for the initial state at $t=\delta t$ with the modified input-entropy constraint
\begin{align}
S(\rho(\delta t))= S_{\delta t}~,
\end{align}
we obtain
\begin{align}
\frac{d}{dt'}S(\Phi_{t^\prime}(\rho(\delta t)))|_{t'=0}\geq \frac{d}{dt'}S\!\left(\Phi_{t^\prime}((\rho^{th}_{g^{-1}(S_{\delta t})}))\right)|_{t'=0}~.
\end{align}
Therefore, for an infinitesimally small $\delta t$, we must have that a thermal state minimizes the output entropy at $t=2(\delta t)$,
\begin{align}
S(\Phi_{2(\delta t)}(\rho))\geq S(\Phi_{\delta_t}(\rho^{th}_{g^{-1}(S_{\delta t})})) = S(\Phi_{2(\delta t)}(\rho^{th}_{g^{-1}(S_0)})~.
\end{align}
We repeat the above procedure for $t = 3(\delta t), \ldots , N(\delta t)$, and then take the limit $N\rightarrow \infty$ (or equivalently $\delta t \to 0$), thereby concluding that a thermal state of entropy $S_0$ minimizes the entropy at the output of $\Phi_t$ at any arbitrary finite $t > 0$, which is the statement of Conjecture~\ref{th:finite-conjecture} in Eq.~\eqref{eq:finite-conjecture}. 
\end{proof}
	
\section{Proof attempt for Conjecture \ref{th:infinite-conjecture}}
\label{sec:th2}
Given the reduction in Section~\ref{sec:Th1}, what  remains to be proven in order to establish Conjecture~\ref{thm:mainresult} for gauge-covariant channels is Conjecture~\ref{th:infinite-conjecture}. Here we discuss our attempt at proving 
Conjecture~\ref{th:infinite-conjecture}.

We first argue that it suffices to restrict the minimization in Conjecture~\ref{th:infinite-conjecture} to {\em passive} input states. Passive states are diagonal in the Fock basis and have non-increasing eigenvalues:
\begin{equation}
\mathtt{p}= \sum_{n=0}^\infty p_n\ket{n}\bra{n},\qquad p_0\geq p_1 \geq \ldots\geq 0~.
\end{equation}	
Since the probabilities $p_n$ are decreasing, we can conclude that all passive states have connected supports~\cite{de2016gaussian}. That is, there exists $N\geq 0$ such that $p_0\geq p_1 \geq\ldots\geq p_N >p_{N+1}=\ldots=0$. 
We define $\D_N$ as the set of all such passive states, and we define the set
$\D_\infty$ as the set of passive states with infinite support. Therefore, all the passive states we need to consider in this problem belong to the following set:
\begin{align}
\D = \bigcup_{N=1}^\infty \D_N~.
\end{align}
The Fock passive rearrangement $\rho^{\downarrow}$ of the state $\rho$ is the only passive state with the same spectrum as that of $\rho$. 

Ref.~\cite{de2015passive} established that {\em passive} states minimize the output entropy of single-mode phase-insensitive Gaussian channels.

\begin{theorem}\label{th:reduce-passive}
For single-mode gauge-covariant BGCs, 
the minimum output entropy is achieved on the set of  passive input states:
\begin{align}
\min_{S(\rho)=S_0}S(\Phi_t(\rho)) = \min_{S(\p) = S_0, \p\in \D}S(\Phi_t(\p)).
\end{align} 
\end{theorem}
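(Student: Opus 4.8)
The plan is to deduce Theorem~\ref{th:reduce-passive} directly from the result of~\cite{de2015passive} recalled just above, via a passive-rearrangement argument. The precise statement I will invoke is: for every input state $\rho$ and every single-mode gauge-covariant channel $\Phi_t=e^{t\L}$,
\begin{align}
S\!\left(\Phi_t(\rho^{\downarrow})\right)\le S\!\left(\Phi_t(\rho)\right),
\end{align}
where $\rho^{\downarrow}$ denotes the Fock-passive rearrangement of $\rho$, i.e., the unique passive state with the same spectrum as $\rho$. This is applicable here because the maps $\Phi_t=e^{t(\gamma_+\L_++\gamma_-\L_-)}$ are exactly the single-mode phase-insensitive Gaussian channels covered by~\cite{de2015passive}.

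Granting this, the theorem follows by an elementary two-sided comparison of the two minimizations. One direction is immediate: every passive state of entropy $S_0$ is in particular \emph{some} state of entropy $S_0$, so the passive-state minimization runs over a subset of the feasible set of the unrestricted minimization, whence $\min_{S(\rho)=S_0}S(\Phi_t(\rho))\le\min_{S(\p)=S_0,\,\p\in\D}S(\Phi_t(\p))$. For the other direction, fix any $\rho$ with $S(\rho)=S_0$ and put $\p=\rho^{\downarrow}$. Then $\p$ has the same spectrum as $\rho$, hence $S(\p)=S(\rho)=S_0$; being passive, its eigenvalues are non-increasing, so its support is automatically an initial segment $\{0,1,\ldots,N\}$ or all of $\mathbb{N}$, i.e., $\p\in\D$; and the displayed inequality gives $S(\Phi_t(\p))\le S(\Phi_t(\rho))$. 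Taking the infimum over all feasible $\rho$ yields $\min_{S(\p)=S_0,\,\p\in\D}S(\Phi_t(\p))\le\min_{S(\rho)=S_0}S(\Phi_t(\rho))$, and the two inequalities together give the claimed equality.

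Two small points require care, though neither is a real obstacle. First, without an energy constraint $S(\cdot)$ is only lower semicontinuous, so the objects above are a priori infima rather than minima; the rearrangement step resolves this uniformly, since it sends any minimizing sequence $\{\rho_k\}$ to a sequence $\{\rho_k^{\downarrow}\}$ of passive states with the \emph{same} entropies and output entropies no larger, so the two optimal values agree and either both problems have a minimizer or neither does. Second, one needs $\rho^{\downarrow}\in\D$ also when $\rho$ has full rank, so $\D$ must be understood to include the set $\D_\infty$ of infinite-support passive states (as it must in any case, since the conjectured minimizer is a thermal state); the rearrangement of a rank-$(N+1)$ state lies in $\D_N$, that of a full-rank state in $\D_\infty$. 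The only substantive ingredient is the cited passivity inequality of~\cite{de2015passive}; if one wanted a self-contained proof, re-establishing that inequality---which comes from the structural analysis of how single-mode phase-insensitive Gaussian channels act on Fock-diagonal states---would be the hard part, whereas the reduction above is routine.
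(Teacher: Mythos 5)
Your proposal is correct and follows essentially the same route as the paper, which likewise proves the theorem by invoking Theorem V.3 of the cited passivity reference to get $S(\Phi_t(\rho))\geq S(\Phi_t(\rho^{\downarrow}))$ and then concluding; you simply spell out the two-sided comparison and the minor points (infima vs.\ minima, and that $\D$ must be read to include $\D_\infty$) that the paper leaves implicit.
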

\begin{proof}
	Theorem V.3 in \cite{de2015passive} implies that for any input state $\rho$,
	\begin{align}
	S(\Phi_t(\rho))\geq S(\Phi_t(\rho^{\downarrow}))~.
	\end{align}
	Thus the claim follows.
\end{proof}

It therefore suffices to prove that Conjecture \ref{th:infinite-conjecture} holds on the set of passive input states.
Now we will calculate the derivative of the output entropy when the input is a passive state. We will first argue that we must restrict the optimization over input passive states with support over all the (infinitely many) Fock basis elements. 

\begin{lemma}\label{lemma:infinite-support}
The passive input state that achieves the minimum output entropy of $\Phi_t = e^{t \L}$ cannot have a finite support on the Fock basis, i.e., zero probability over basis elements $\left\{|N+1\rangle, |N+2\rangle, \ldots\right\}$ for some finite $N$. 
\end{lemma}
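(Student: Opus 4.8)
The plan is to compute the initial rate of change of the output entropy, $\frac{d}{dt}S(\Phi_t(\p))|_{t=0^+}$, directly from the Lindblad master equation, and to show that this one-sided derivative equals $+\infty$ for every passive state of finite support, whereas it is finite for the thermal state of the same entropy. Since there is then a competitor (the thermal state) that strictly outperforms a finite-support input, no such input can achieve the minimum output entropy.

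First I would observe that $\L_+$ and $\L_-$ both map Fock-diagonal operators to Fock-diagonal operators, so the output $\Phi_t(\p)$ of a passive input $\p=\sum_{n=0}^{N}p_n\ket{n}\bra{n}\in\D_N$ stays diagonal, with eigenvalues $q_n(t)$ governed by the birth--death master equation
\begin{align}
\dot q_n=\gamma_+\!\left[n\,q_{n-1}-(n+1)\,q_n\right]+\gamma_-\!\left[(n+1)\,q_{n+1}-n\,q_n\right],\qquad q_n(0)=p_n .
\end{align}
Using $\sum_n\dot q_n=0$, this gives $\frac{d}{dt}S(\Phi_t(\p))=-\sum_n\dot q_n\ln q_n$, which I would analyze as $t\to 0^+$.

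The key step is to isolate the first level that becomes populated. Because $p_N>0$ and $p_{N+1}=p_{N+2}=0$, the only vanishing eigenvalue with a nonzero initial rate is $q_{N+1}$, and
\begin{align}
\dot q_{N+1}(0)=\gamma_+\,(N+1)\,p_N>0 ,
\end{align}
the positivity using $\gamma_+>0$ (which holds for every gauge-covariant BGC with an amplification or thermal-noise component; the pure-loss case $\gamma_+=0$, for which finite support is preserved, is precisely the one already settled in \cite{de2016gaussian}). Solving the master equation for small $t$ gives $q_{N+1}(t)=\gamma_+(N+1)p_N\,t+O(t^2)$, $q_n(t)=O(t^2)$ for $n\ge N+2$, and $q_n(t)=p_n+O(t)$ for $n\le N$. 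Substituting into $-\sum_n\dot q_n\ln q_n$: the $n\le N$ terms remain bounded, the $n\ge N+2$ terms contribute $O(t\ln t)\to 0$, and the $n=N+1$ term contributes $-\gamma_+(N+1)p_N\ln t+O(1)\to+\infty$ as $t\to 0^+$. Hence $\frac{d}{dt}S(\Phi_t(\p))|_{t=0^+}=+\infty$ for every $\p\in\D_N$ with $N<\infty$.

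Finally, I would compare with the thermal state. Gauge-covariant channels send thermal states to thermal states, so $S(\Phi_t(\rho^{\rm th}_{g^{-1}(S_0)}))=g(\bar n(t))$ for a smooth trajectory with $\bar n(0)=g^{-1}(S_0)$, and therefore $\frac{d}{dt}S(\Phi_t(\rho^{\rm th}_{g^{-1}(S_0)}))|_{t=0}$ is finite. Thus $S(\Phi_t(\p))-S(\Phi_t(\rho^{\rm th}_{g^{-1}(S_0)}))$ vanishes at $t=0$ but has an infinite right-derivative there, hence is strictly positive for all sufficiently small $t>0$; a finite-support passive state is beaten by the thermal state and cannot minimize the output entropy, so the optimization may be restricted to $\D_\infty$. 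I do not expect a genuine obstacle in this lemma --- it is the easy reduction step, the only points needing care being the (routine, since the $q_n(t)$ are analytic for $t>0$) justification of the small-$t$ expansion and the caveat that the argument is vacuous when $\gamma_+=0$. The real difficulty comes afterwards, in the Lagrange-multiplier analysis over $\D_\infty$, where a spurious Fock-type critical point appears that cannot be ruled out by the same elementary means.
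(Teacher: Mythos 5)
Your proposal is correct and follows essentially the same route as the paper's own proof: both compute the one-sided derivative of the output entropy at $t=0^+$ from the Fock-diagonal birth--death master equation, identify the divergent contribution $-\gamma_+(N+1)p_N\ln p_{N+1}(t)\to+\infty$ from the newly populated level $N+1$, and set aside the pure-loss case $\gamma_+=0$ by appeal to \cite{de2016gaussian}. Your explicit $O(t\ln t)$ control of the levels $n\ge N+2$ and the closing comparison with the (finite-derivative) thermal trajectory are slightly more careful than the paper's write-up, but they are the same argument.
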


\begin{proof}
Let $\p^{(N)}(0) \in \D_N$ be the input passive state, and let $\p^{(N)}(t) = \Phi_t(\p^{(N)}(0)) \equiv \sum_{n=0}^{\infty}p_n(t)|n\rangle \langle n|${\footnote{Note that $\p^{(N)}(t)$ may have support over Fock basis elements higher than $|N\rangle$, as reflected by the upper limit of the sum.}}. By assumption, we have that $p_n(0) = 0$ for all $n \ge N+1$.
	\begin{align}\nonumber
	&\frac{d}{dt}(e^{t\L} \p^{(N)}(t))|_{t=0}
	=\L (\p^{(N)}(0))\\
	\nonumber
	 & = \sum_{n=0}^N p_n(0) \gamma_+ (a^\dagger\ket{n}\bra{n} a -\frac{1}{2}aa^\dagger\ket{n}\bra{n} -\frac{1}{2} \ket{n}\bra{n} aa^\dagger)\\
	 \nonumber
	&\quad+\sum_{n=0}^N p_n(0) \gamma_-(a\ket{n}\bra{n} a^\dagger -\frac{1}{2}a^\dagger a\ket{n}\bra{n} -\frac{1}{2} \ket{n}\bra{n} a^\dagger a)\\
	\nonumber
	&=\sum_{n=0}^N p_n(0) \gamma_+ ((n+1)\ket{n+1}\bra{n+1} -(n+1)\ket{n}\bra{n})\\
	&\quad+\sum_{n=0}^N p_n(0) \gamma_- (n\ket{n-1}\bra{n-1} -n\ket{n}\bra{n})~.
	\end{align}
	where, in the above, we take the convention that
	$\left\vert -1\right\rangle \!\left\langle -1\right\vert$ is the zero operator.
	Therefore for $n = 0, 1,\ldots, N$, we have the following equations:
	\begin{multline}\label{eq:p'0}
	p_n'(0)
	=\gamma_+( np_{n-1}-(n+1)p_n) \\+\gamma_-((n+1)p_{n+1}-np_n)~,
	\end{multline}
	where it is implicit that $p_{-1}(0) = 0$ and $p_{N+1}(0) = 0$.
	By Taylor expanding the operator
	$e^{t\L}$ as $\sum_{k=0}^\infty ({t\L})^k / k!$, one can see that
	when $\gamma_+>0$, for all $t >0$
	  the state $\p^{(N)}(t)\in \D_{\infty}$.
	For all initial states with finite support (i.e.,
	$\p^{(N)}(0)\in \D_N$), we can calculate the derivative of the output entropy as follows:
	\begin{align}
	&\left.\frac{d}{dt}S(e^{t\L}\p^{(N)}(0))\right|_{t=0}\nonumber\\
	& = \left.\frac{d}{dt}S(\p^{(N)}(t))\right|_{t=0}
	\nonumber\\
	& = \frac{d}{dt}\left.\left[
	-\sum_{n=0}^{\infty} p_n(t) \ln p_n(t)\right]\right|_{t=0}\nonumber\\
	& = \left.\left[
	-\sum_{n=0}^{\infty} (1+\ln p_n(t))p_n'(t)\right]\right|_{t=0}	
	\end{align}
	To evaluate this latter limit as $t\to 0$,
	consider from \eqref{eq:p'0} that all terms
	$p_n'(0)$ with $n>N+1$ converge to zero, whereas
	all terms $p_n(0)$ with $n > N$ converge to zero.
	We can then write out the last line above explicitly
	as
	\begin{multline}
	\frac{d}{dt}S(\p^{(N)})|_{t=0} 
	= \\-\sum_{n=0}^{N} (1+\ln p_n(0))\Big[\gamma_+ np_{n-1}(0)-\gamma_+(n+1)p_n(0) \\
	 -\gamma_-n p_{n}(0) +\gamma_- (n+1)p_{n+1}(0)\Big]\\
	  - \gamma_+ (N+1) \lim_{t \to 0}
	  p_N(t)(1+\ln p_{N+1}(t))~.
	\end{multline}
	For $\gamma_+ >0$, 
	\begin{align}
	 - \gamma_+ (N+1) \lim_{t \to 0}
	  p_N(t)(1+\ln p_{N+1}(t))\rightarrow +\infty~.
	\end{align} 
	Thus, the derivative of the output entropy is infinitely large for any initial passive state with finite support. That is to say, any passive state $\p^{(N)}\in \D_N$ with $N<+\infty$ cannot be a local minimum of the derivative of output entropy. Therefore, we only need to search for a minimum within the set of passive states having infinite support, that is, $\p^{(\infty)}\in \D_\infty$.\footnote{Note that when $\gamma_+=0$, the channel is a pure-loss channel, and the derivative does not diverge in this case. However, following the proof in \cite{de2016gaussian}, it can be shown as well that it is sufficient to consider passive states in $\D_\infty$.}
\end{proof}	
	
	For the last part of our discussion, we  restrict the minimization to be over passive input states with an infinite support over the Fock basis. This optimization can be formulated as the following Lagrange multiplier problem:
	\begin{multline}\label{eq:Lagragian}
	L = S'(\p^{(\infty)}(0))+ \mu\left(-\sum_{n=0}^\infty p_n(0)\ln p_n(0)-S_0\right)\\
	 + \lambda \left(\sum_{n=0}^\infty p_n(0)-1\right)~,
	\end{multline}
	where,
	\begin{multline}	\label{eq:derivative}
	S'(\p^{(\infty)}(0)) = \\
	-\sum_{n=0}^{\infty} (1+\ln p_n(0))\Big[\gamma_+ np_{n-1}(0)-\gamma_+(n+1)p_n(0)\\
	-\gamma_-n p_{n}(0) +\gamma_- (n+1)p_{n+1}(0)\Big] ~.
	\end{multline}
	Any local minimum or maximum must be a solution of 
	\begin{align}\label{eq:mini}
	\nabla L = \left(\frac{\partial L}{\partial p_n},\frac{\partial L}{\partial \mu}, \frac{\partial L}{\partial \lambda}\right) = 0~.
	\end{align}

	Differentiating (\ref{eq:Lagragian}) with respect to $p_n(0)$ and using (\ref{eq:derivative}), we have that
	\begin{multline}
	-\gamma_+ n z_{n-1}^{-1}+\gamma_+ (n+1) + \gamma_-n -\gamma_- (n+1)z_n\\
	 +\gamma_+ (n+1)\ln z_{n}^{-1} + \gamma_-n\ln z_{n-1}\\-\mu(1+\ln p_n(0))+\lambda = 0~,
	\end{multline}
	where in the above we define 
	\begin{align}
	z_n = \frac{p_{n+1}(0)}{p_n(0)}~.
	\end{align}
	Since $\mathtt{p}_\infty \in \D_\infty$, $0<z_n\leq 1$.
%	Again for $n=0$, just remove terms involving $z_{-1}$,
%	\begin{align}
%	\gamma_+ - \gamma_- z_0 +\gamma_+\ln \frac{1}{z_0} -\mu(1+\ln p_0)+\lambda = 0~.
%	\end{align}
	To get a recursive relation between $\brac{z_n}$, take the difference between two consecutive values of $n$, we get
	\begin{multline}
	\gamma_+n\left(\frac{1}{z_n}-\frac{1}{z_{n-1}}\right)+\gamma_-(n+2)(z_{n+1}-z_n)\\
	=\gamma_++\gamma_- - \gamma_+ \frac{1}{z_n}-\gamma_-z_n + \gamma_+(n+2)\ln\frac{1/z_{n+1}}{1/z_{n}}\\
	+\gamma_- n\ln\frac{z_n}{z_{n-1}}+(-\gamma_++\gamma_- -\mu)\ln z_n~.
	\end{multline}
	For the pure-loss channel, set $\gamma_+ =0, \gamma_- = 1$, we then recover (V.30) in Ref.\cite{de2016gaussian}. The above equation can be rewritten as following,
	\begin{align}\label{eq:master}
	(n+2)[f(z_{n+1})-f(z_{n})]=n[g(z_{n})-g(z_{n-1})]+\Delta(z_n)
	\end{align}
	for $n \geq 1$. When $n=0$, we have 
	\begin{align}\label{eq:master-0}
	2[f(z_{1})-f(z_{0})]=\Delta(z_0)~,
	\end{align}
	where
	\begin{align}
	\label{eq:fx}
	f(x) &= \gamma_-x+\gamma_+\ln x~,\\
	g(x) &= \gamma_-\ln x - \gamma_+ \frac{1}{x}~,
	\end{align}
	both of which are monotonic increasing with well-defined inverse function for $\gamma_+\geq 0$ and $\gamma_-\geq 0$. And we define $\Delta(x)$ as
	\begin{align}\label{eq:Delta}
	\Delta(x) = \gamma_++\gamma_- -\gamma_+\frac{1}{x}-\gamma_-x +(\gamma_--\gamma_+-\mu)\ln x~.
	\end{align}
	Let us also define the function
	\begin{align}
	h(x)=\frac{\gamma_-x+\gamma_+\frac{1}{x}-(\gamma_++\gamma_-)}{\ln x}~,
	\end{align}
	which is strictly increasing since both functions $(x-1)/\ln x$ and $(1/x-1)/\ln x$ are strictly increasing.
	Then we have the following lemma:
	\begin{lemma}\label{lemma:monotonic}
	We must have that $\brac{z_n}$ is either strictly increasing, strictly decreasing or is ¡a constant. In particular, if $h(z_0)> \gamma_--\gamma_+ -\mu$,  $\brac{z_n}$ is strictly increasing, $0<z_0\leq z_1<\ldots\leq1$. On the other hand, if $h(z_0)< \gamma_--\gamma_+ -\mu$, $\brac{z_n}$ is strictly decreasing, $1\geq z_0> z_1 >\ldots >0$. And $\brac{z_n}$ is a constant if and only if $h(z_0)=\gamma_--\gamma_+ -\mu$.
	\end{lemma}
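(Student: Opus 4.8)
The plan is to reduce the whole statement to a single algebraic identity linking $\Delta$ and $h$, and then run a strong induction on $n$ driven by the recursions \eqref{eq:master} and \eqref{eq:master-0}. First I would record the identity
\begin{equation}
\Delta(x) \;=\; -(\ln x)\bigl[\,h(x)-(\gamma_--\gamma_+-\mu)\,\bigr],
\end{equation}
which follows by clearing the denominator in the definition of $h$ and comparing term by term with \eqref{eq:Delta}; write $c:=\gamma_--\gamma_+-\mu$. Since $\mathtt{p}_\infty\in\D_\infty$ we have $0<z_n\le1$ for every $n$, and moreover $z_0<1$: indeed $z_0=1$ gives $\Delta(z_0)=0$, hence $z_1=z_0=1$ by \eqref{eq:master-0} (as $f$ is injective), and then $z_n=1$ for all $n$ by \eqref{eq:master}, which contradicts $\sum_n p_n(0)=1$. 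Consequently, whenever $z_n<1$ we have $-\ln z_n>0$, so $\Delta(z_n)$ has the same sign as $h(z_n)-c$, and $\Delta(z_n)=0$ if and only if $h(z_n)=c$.

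Next, the base case: by \eqref{eq:master-0}, $2[f(z_1)-f(z_0)]=\Delta(z_0)$, and since $f$ is strictly increasing (as noted after \eqref{eq:fx}), the four quantities $z_1-z_0$, $f(z_1)-f(z_0)$, $\Delta(z_0)$ and $h(z_0)-c$ all share the same sign; this is the trichotomy at the first step. For the inductive step in the increasing case $h(z_0)>c$, assume $z_0<z_1<\dots<z_n$ with all $z_k<1$. Then $z_n>z_0$ and strict monotonicity of $h$ give $h(z_n)>h(z_0)>c$, so $\Delta(z_n)>0$; strict monotonicity of $g$ gives $g(z_n)-g(z_{n-1})>0$; hence the right-hand side of \eqref{eq:master} is positive for $n\ge1$, forcing $f(z_{n+1})>f(z_n)$, i.e.\ $z_{n+1}>z_n$. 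To keep the bound $z_{n+1}<1$, note that if $k\ge1$ were the first index with $z_k=1$, then \eqref{eq:master} at $n=k$ reads $(k+2)[f(z_{k+1})-f(1)]=k[g(1)-g(z_{k-1})]>0$, so $z_{k+1}>1$, impossible. The decreasing case $h(z_0)<c$ is entirely symmetric (here $z_k<z_0<1$ is automatic), and the constant case $h(z_0)=c$ follows at once by induction, since $h(z_n)=h(z_0)=c$ keeps $\Delta(z_n)=0$ and $g(z_n)-g(z_{n-1})=0$, whence $z_{n+1}=z_n$. As the three hypotheses $h(z_0)>c$, $h(z_0)<c$, $h(z_0)=c$ are exhaustive and mutually exclusive, $\brac{z_n}$ is exactly one of strictly increasing, strictly decreasing, or constant, with the stated bounds $0<z_0\le z_1\le\cdots\le1$ respectively $1\ge z_0\ge z_1\ge\cdots>0$.

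The calculations involved are routine; the only point that genuinely requires care is that the sign of $\Delta(z_n)$ must be deduced from $z_n>z_0$ (not from $\Delta(z_0)>0$ directly), so the argument has to be organized as a strong induction in which the monotonicity already established at steps $\le n$ is fed back into the monotonicity of $h$ to control $\Delta(z_n)$. The other mild nuisances are ruling out $z_n=1$ in the increasing case, as indicated above, and discarding the degenerate configuration $z_0=1$ on normalizability grounds. Beyond these bookkeeping matters, Lemma~\ref{lemma:monotonic} is a direct consequence of the $\Delta$–$h$ identity together with the strict monotonicity of $f$, $g$, and $h$.
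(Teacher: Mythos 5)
Your proposal is correct and follows essentially the same route as the paper's own proof: a strong induction on $n$ in which the sign of $\Delta(z_n)$ is controlled via the strict monotonicity of $h$ together with $z_n \gtrless z_0$, and the recursions \eqref{eq:master-0} and \eqref{eq:master} are then closed using the strict monotonicity of $f$ and $g$. The only differences are presentational refinements on your part --- the explicit identity $\Delta(x)=-(\ln x)\bigl[h(x)-(\gamma_--\gamma_+-\mu)\bigr]$ and the careful exclusion of the degenerate values $z_0=1$ and $z_k=1$, which the paper leaves implicit.
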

	\begin{proof}
		Let us first consider $h(z_0)> \gamma_--\gamma_+ -\mu$. We want to prove $0<z_0< z_1<\ldots\leq1$. From the definition Eq.~(\ref{eq:Delta}) we have $\Delta(z_0)> 0$. Then from Eq.~(\ref{eq:master-0}), we have
		\begin{align}
		2[f(z_1)-f(z_0)]>0~.
		\end{align} 
		Since $f(x)$ is strictly increasing, we must have
		\begin{align}
		z_1>z_0 >0~.
		\end{align}
		Now let us assume that $z_n>\ldots> z_1>z_0>0$ is true. We will prove that $z_{n+1}> z_n$. Since $h(x)$ is increasing and $z_n> z_{0}$, we must have
		\begin{align}
		h(z_n)> h(z_0)>\gamma_--\gamma_+ -\mu~,
		\end{align}
		which in turn gives us $\Delta(z_n)> 0$. Also because $g(x)$ is strictly increasing, we have $g(z_{n})-g(z_{n-1})> 0$. Then from the recursive relation Eq. (\ref{eq:master}), we have
		$(n+2)[f_{z_n+1}-f(z_n)] > 0$. Since $f(x)$ is strictly increasing, we have 
		\begin{align}
		z_{n+1}> z_n~.
		\end{align}
		By induction for $n=2,3,\ldots$, we have $0<z_0< z_1<\ldots\leq1$. 
		Therefore in this case $\brac{z_n}$ must be strictly increasing.
		
		When $h(z_0)< \gamma_--\gamma_+ -\mu$, we have $\Delta(z_0)< 0$, it's not difficult to follow a similar induction procedure to prove
		$1\geq z_0> z_1 > \ldots \geq 0$.
		
		Finally when $h(z_0)=\gamma_--\gamma_+ -\mu$, the right hand side of the recursive equation Eq. (\ref{eq:master}) is always zero, thus $\brac{z_n}$ is a constant.
	\end{proof}
	\begin{lemma}\label{lemma:z!=1}
		We have $\lim\limits_{n\rightarrow \infty}z_n\neq 1$.
	\begin{proof}
		From Lemma \ref{lemma:monotonic}, $\brac{z_n}$ is either monotonically increasing or decreasing. When $\brac{z_n}$ is monotonically decreasing and if $\lim\limits_{n\rightarrow \infty}z_n= 1$, we must have $z_0 = z_1=\ldots=1$. Thus we have a uniform distribution with entropy
		\begin{align}
		S = \lim_{n\rightarrow \infty}\log n,
		\end{align}
		which is infinitely large. This contradicts  the finite constraint $S(\rho) = S_0$. Therefore $\lim\limits_{n\rightarrow \infty}z_n\neq 1$.
		
		Now consider the case when $\brac{z_n}$ is monotonically increasing and assume $\lim\limits_{n\rightarrow \infty}z_n= 1$. Then for arbitrary $\epsilon >0$, there exists a large integer $N$ such that $|p_n-p_N|\leq \epsilon, ~\forall n>N$. The entropy of the state is 
		\begin{align}
		\nonumber
		S &= -\sum_{n=0}^N p_n(0)\log p_n(0) -\sum_{n=N+1}^\infty p_n(0)\log p_n(0)\\
		&=-\sum_{n=0}^N p_n(0)\log p_n(0) - P_N\log(1-P_N)\nonumber\\
		&\qquad  +\lim_{n\rightarrow \infty}P_N\log n +o(\epsilon)~,
		\end{align}
		where $P_N = \sum_{n=0}^N p_n(0)$. Again $S$ diverges, and so $\brac{z_n}$ cannot be a valid solution. Therefore for any solution of (\ref{eq:master}), $\lim\limits_{n\rightarrow \infty}z_n\neq 1$~.
	\end{proof}
	\end{lemma}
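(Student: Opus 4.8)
The plan is to combine the trichotomy of Lemma~\ref{lemma:monotonic} with the constraints $\sum_n p_n = 1$ and $S(\rho)=S_0<\infty$, and to fall back on the master recursion~\eqref{eq:master}--\eqref{eq:master-0} in the one case that is not immediate. By Lemma~\ref{lemma:monotonic}, $\{z_n\}$ is strictly increasing, strictly decreasing, or constant, with $z_n\in(0,1]$ throughout. If $\{z_n\}$ is strictly decreasing, then $z_n\le z_1<z_0\le 1$ for every $n\ge 1$, so $\lim_n z_n\le z_1<1$ outright. If $\{z_n\}$ is constant, $z_n\equiv z_0$, then $p_n=p_0 z_0^{\,n}$; the choice $z_0=1$ is the non-normalizable uniform sequence, whose entropy $\lim_n\ln n$ is infinite (contradicting $S(\rho)=S_0<\infty$), while $z_0<1$ gives $\lim_n z_n=z_0<1$. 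So only the strictly increasing case remains.

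Assume $\{z_n\}$ is strictly increasing with $\lim_n z_n=1$ and set $\delta_n:=1-z_n\downarrow 0$. Step 1: $\sum_n\delta_n=\infty$, for otherwise $\prod_k z_k$ converges to a strictly positive limit, so $p_n=p_0\prod_{k<n}z_k$ does not tend to $0$ and $\sum_n p_n$ diverges, contradicting normalization. Step 2: telescope~\eqref{eq:master}. Summing over $n=1,\dots,M$, using~\eqref{eq:master-0} at $n=0$, and applying Abel summation expresses $\sum_{n=1}^M\Delta(z_n)$ in terms of the endpoint values $f(z_{M+1}),g(z_M)$, fixed boundary terms, and $\sum_{n=2}^{M-1}\bigl(g(z_n)-f(z_n)\bigr)$. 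Since $f(1)=\gamma_-$, $g(1)=-\gamma_+$, and---crucially---$f'(1)=g'(1)=\gamma_++\gamma_-$, the pieces that are linear in $M$ cancel, leaving
\begin{equation}
\sum_{n=1}^M\Delta(z_n)=C+(\gamma_++\gamma_-)\bigl(M\delta_M-(M+2)\delta_{M+1}\bigr)+O\!\Bigl(\textstyle\sum_{n\ge 2}\delta_n^2\Bigr),
\end{equation}
with $C$ independent of $M$. Step 3: since $\Delta(1)=0$ and $\Delta'(1)=-\mu$, we have $\Delta(z)=\mu(1-z)+O\bigl((1-z)^2\bigr)$, so $\sum_{n=1}^M\Delta(z_n)=\mu\sum_{n=1}^M\delta_n+O\bigl(\sum_n\delta_n^2\bigr)$. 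If $\mu\ne 0$ and if $M\delta_M-(M+2)\delta_{M+1}\to 0$ and $\sum_n\delta_n^2<\infty$, the left-hand side is bounded while $\mu\sum_n\delta_n=\pm\infty$ by Step 1---a contradiction, which finishes this case. To supply the two regularity facts I would return to~\eqref{eq:master}: in the increasing case $\Delta(z_n)>0$ forces $(n+2)e_n>n e_{n-1}$ for the increments $e_n:=z_{n+1}-z_n>0$, which together with $\sum_n e_n=\delta_1<\infty$ and the monotone forcing $\mu\delta_n$ on the right should give that $e_n$ is eventually monotone---hence $n e_n\to 0$ and $M\delta_M-(M+2)\delta_{M+1}\to 0$---and that $\delta_n$ decays fast enough for $\sum_n\delta_n^2<\infty$.

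The main obstacle is this last step together with the degenerate case $\mu=0$, where $\Delta$ vanishes to \emph{second} order at $z=1$ and the contradiction above collapses. One would then have to argue independently that $\mu\ne 0$---which is what one expects, since the constraint $S(\rho)=S_0>0$ is active; indeed the thermal solution $z_n\equiv\zeta$ with $\zeta=\bar n/(\bar n+1)$ has $\mu=\gamma_--\gamma_+-h(\zeta)>0$, because $h$ is increasing and $h(1)=\gamma_--\gamma_+$---or else push the asymptotics of~\eqref{eq:master} one order further to exclude $z_n\to 1$ even when $\mu=0$. I expect pinning down $\mu\ne 0$ (or the higher-order analysis) to be the real work, the endpoint and $\sum_n\delta_n^2$ estimates being tedious but routine once the recursion is used. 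Finally, the tempting shortcut ``$z_{n+1}/z_n\to 1$ makes the tail of $\{p_n\}$ nearly uniform, hence $S$ diverges'' is not by itself conclusive: there exist bona fide probability vectors with $-\sum_n p_n\ln p_n<\infty$ and $p_{n+1}/p_n\to 1$ (e.g.\ $p_n\propto 1/(n^2\ln^2 n)$), so the recursion~\eqref{eq:master} really must be invoked.
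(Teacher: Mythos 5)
Your handling of the strictly decreasing and constant cases is correct and, if anything, cleaner than the paper's (for a strictly decreasing sequence one has $\lim_n z_n\le z_1<z_0\le 1$ and nothing more is needed; the constant case reduces to normalizability of the geometric sequence). The genuine gap is that you never close the strictly increasing case, which is the only case with real content, and you say so yourself: the telescoping/Abel-summation argument hinges on two regularity facts---$M\delta_M-(M+2)\delta_{M+1}\to 0$ and $\sum_n\delta_n^2<\infty$---that you only gesture at via ``eventual monotonicity'' of the increments, and the entire contradiction collapses in the degenerate case $\mu=0$, which you leave open. Showing $\mu>0$ for the thermal critical point does not exclude other critical points of the Lagrangian with $\mu=0$. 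As written, therefore, the proposal is not a proof of the lemma.

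That said, your diagnosis of why the ``near-uniform tail'' shortcut is inconclusive is exactly right, and it cuts against the paper's own argument: the paper's treatment of the increasing case is precisely that shortcut. It passes from $z_n\to 1$ to the statement $|p_n-p_N|\le\epsilon$ for all $n>N$ (which is automatically satisfied for any distribution with $p_n\to 0$ and carries no information about near-uniformity of the tail) and then writes the tail entropy as $-P_N\log(1-P_N)+\lim_n P_N\log n$, i.e.\ treats the tail as if it were uniform. Your example $p_n\propto 1/(n^2\ln^2 n)$ is a normalizable distribution with $p_{n+1}/p_n$ increasing to $1$ and finite Shannon entropy, so divergence of $S$ simply does not follow from $z_n\to 1$ alone; any complete argument must invoke the recursion \eqref{eq:master}, as you attempt to. In short, both your argument and the paper's leave the increasing case unresolved---yours by an explicitly acknowledged gap in the asymptotic analysis of \eqref{eq:master}, the paper's by an unflagged non sequitur about tail uniformity.
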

	
	\begin{lemma}
		\label{lemma:geometric}
		We must have $z_0=z_1=\ldots=z$, with $0< z<1$, or $z_0>z_1>\ldots>z$, with $z=0$.
	\end{lemma}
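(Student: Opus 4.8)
The plan is to combine the two structural lemmas just proved with a telescoping argument applied to the recursion~\eqref{eq:master}. By Lemma~\ref{lemma:monotonic}, $\brac{z_n}$ is constant, strictly increasing, or strictly decreasing, and by Lemma~\ref{lemma:z!=1} its limit is never $1$. The constant case is settled at once: since $\mathtt{p}_\infty\in\D_\infty$ we have $z_n=p_{n+1}(0)/p_n(0)>0$ for every $n$, so a constant sequence takes a value $z\in(0,1]$, and $z\neq1$ forces $0<z<1$---the first alternative of the lemma. It therefore remains to show that the strictly increasing case cannot occur and that a strictly decreasing $\brac{z_n}$ must converge to $0$---the second alternative. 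In both remaining cases $\brac{z_n}$ is monotone and bounded above by $1$, hence converges to some $z^*\in[0,1)$, and the crux is to rule out $z^*\in(0,1)$.

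First I would divide~\eqref{eq:master} by $(n+2)$ and sum over $n=1,\dots,N$, which telescopes on the left to give
\begin{align}
f(z_{N+1})-f(z_1)=\sum_{n=1}^{N}\frac{n}{n+2}\bigl[g(z_n)-g(z_{n-1})\bigr]+\sum_{n=1}^{N}\frac{\Delta(z_n)}{n+2}~.
\end{align}
Assuming $z^*\in(0,1)$, the left-hand side converges to $f(z^*)-f(z_1)$, since $f$ in~\eqref{eq:fx} is continuous on $(0,\infty)$. For the first sum, write $\frac{n}{n+2}=1-\frac{2}{n+2}$, so that it equals $\bigl(g(z_N)-g(z_0)\bigr)-2\sum_{n=1}^{N}\frac{g(z_n)-g(z_{n-1})}{n+2}$; because $\brac{z_n}$ and $g$ are monotone, $\brac{g(z_n)}$ is monotone and bounded (it converges, as $z^*>0$), so $\sum_n\abs{g(z_n)-g(z_{n-1})}<\infty$ and therefore both terms converge. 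Hence the series $\sum_{n}\frac{\Delta(z_n)}{n+2}$ must also converge.

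The contradiction comes from the value of $\Delta$ at $z^*$. From~\eqref{eq:Delta} and the definition of $h$, one checks that for $x\neq1$ one has $\Delta(x)=\ln x\bigl(\gamma_--\gamma_+-\mu-h(x)\bigr)$, so $\Delta(x)=0$ if and only if $h(x)=\gamma_--\gamma_+-\mu$. In the strictly increasing case $z^*>z_0$, and $h$ is strictly increasing, so the hypothesis $h(z_0)>\gamma_--\gamma_+-\mu$ of Lemma~\ref{lemma:monotonic} yields $h(z^*)>\gamma_--\gamma_+-\mu$, hence $\Delta(z^*)>0$. In the strictly decreasing case with $z^*>0$, we have $z^*<z_0$ and $h(z_0)<\gamma_--\gamma_+-\mu$, hence $\Delta(z^*)<0$. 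Either way $\Delta(z_n)\to\Delta(z^*)\neq0$, so $\frac{\Delta(z_n)}{n+2}\sim\frac{\Delta(z^*)}{n}$ and $\sum_n\frac{\Delta(z_n)}{n+2}$ diverges, contradicting the previous step. Thus $z^*\notin(0,1)$: a strictly increasing bounded sequence with $z_0>0$ and $\lim_n z_n\neq1$ would force $z^*\in(0,1)$, so that case is excluded; and a strictly decreasing $\brac{z_n}$ must satisfy $z^*=0$. Together with the constant case, this is exactly the claim.

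I expect the main obstacle to be the last step, namely being certain that $\Delta(z^*)\neq0$: this is where the sign hypotheses of Lemma~\ref{lemma:monotonic} and the strict monotonicity of $h$ are indispensable, and the argument is fragile precisely because the limit $z^*$ must inherit a strict inequality relative to $\gamma_--\gamma_+-\mu$. A secondary caveat is the pure-loss endpoint $\gamma_+=0$: there $f(x)=\gamma_-x$ extends continuously to $x=0$ and the divergence accounting above no longer forces a contradiction, so one must instead invoke the corresponding argument of~\cite{de2016gaussian} to obtain $z^*=0$.
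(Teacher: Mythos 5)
Your proof is correct, but it reaches the conclusion by a genuinely different mechanism than the paper. The paper works pointwise with the recursion: it divides \eqref{eq:master} by $\ln z_n$, invokes $\lim_{n\to\infty}\ln z_{n+1}/\ln z_n=1$, and then argues that $\sum_n(z_{n+1}-z_n)<\infty$ forces $n(z_{n+1}-z_n)\to 0$, so that passing to the limit yields $h(z)\leq\gamma_--\gamma_+-\mu$ and hence a contradiction with $h(z_0)>\gamma_--\gamma_+-\mu$. You instead divide \eqref{eq:master} by $(n+2)$, telescope, and show that convergence of $f(z_{N+1})$ and of the $g$-sum forces $\sum_n\Delta(z_n)/(n+2)$ to converge, which is impossible when $\Delta(z_n)\to\Delta(z^*)\neq 0$; the sign of $\Delta(z^*)$ is pinned down exactly as in the paper, via the factorization $\Delta(x)=\ln x\,(\gamma_--\gamma_+-\mu-h(x))$ and strict monotonicity of $h$. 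Your route buys something real: the paper's inference that a convergent positive series has terms $o(1/n)$ is not literally true (it only gives $\liminf_n n(z_{n+1}-z_n)=0$, which suffices but is not what is written), and the paper also needs the ratio limit \eqref{eq:ratio-limit}, which is precisely the step the authors flag as failing in the decreasing-to-zero case; your summation argument needs neither, since it only ever assumes $z^*\in(0,1)$, where $f$, $g$, $h$ are all continuous. Two small remarks: your closing caveat about $\gamma_+=0$ is unnecessary, because your contradiction only targets $z^*\in(0,1)$ and the decreasing-to-zero alternative is an allowed conclusion of the lemma, not something to be excluded (the argument goes through verbatim for $\gamma_+=0$); and in the strictly increasing case you should note, as the paper's constraint $0<z_n\leq 1$ already guarantees, that $z_n<1$ for all $n$ so that $\ln z^*<0$ is legitimate.
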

	\begin{proof}
		From Lemma \ref{lemma:monotonic}, the series $\brac{z_n}$ is either monotonic increasing or decreasing. Since $0\leq z_n \leq 1$, $z_n$ must converge at large $n$,
		\begin{align}
		\lim\limits_{n\rightarrow \infty}z_n = z~,
		\end{align} 
		where $0 \leq z \leq 1$. From Lemma \ref{lemma:z!=1} we know that $z\neq 1$. 
		When $\brac{z_n}$ is increasing, $z\neq 0$ since $z_n$ would be negative. When $\brac{z_n}$ is a constant,  $z\neq0$ represents the zero-temperature limit -- the vacuum state.
		
		Let us first consider the case when $h(z_0)> \gamma_--\gamma_+ -\mu$ and $\brac{z_n}$ is increasing.  
		We then have $g(z_{n})-g(z_{n-1}) > 0$. From Eq. (\ref{eq:master})  we get 
		\begin{align}
		\Delta(z_n)- (n+2)[f(z_{n+1})-f(z_{n})]< 0~.
		\end{align}
		 Since $\ln z_n <0$, we can divide it on both sides and have
		\begin{align}
		-h(z_n) +(\gamma_--\gamma_+-\mu)> \frac{(n+2)[f(z_{n+1})-f(z_{n})]}{\ln z_n}~.
		\end{align}
		Now take the limit of $n\rightarrow \infty$ on both sides and since
		\begin{align}
		\nonumber
		&\lim\limits_{n\rightarrow \infty}\frac{(n+2)[f(z_{n+1})-f(z_{n})]}{\ln z_n}\\
		\nonumber
		=&\lim\limits_{n\rightarrow \infty} (n+2)\frac{\gamma_-(z_{n+1}-z_n)+\gamma_+(\ln z_{n+1}-\ln z_{n})}{\ln z_n}\\
		=& \gamma_-\lim\limits_{n\rightarrow \infty} (n+2)\frac{z_{n+1}-z_n}{\ln z}~,
		\end{align}
		where we use the fact 
		\begin{align}\label{eq:ratio-limit}
		\lim_{n\rightarrow \infty}\frac{\ln z_{n+1}}{\ln z_n}=1
		\end{align}
		for $0<z<1$.
		We then get
		\begin{align}\label{eq:step1}
		-h(z_n) +(\gamma_--\gamma_+-\mu)>\gamma_-\lim\limits_{n\rightarrow \infty}\frac{n(z_{n+1}-z_n)}{\ln z}~.
		\end{align}
		Define $x_n = z_{n+1}-z_n$ and consider the summation
		\begin{align}\label{eq:sum-bound}
		\sum_{n=0}^{\infty} x_n= z-z_0 <+\infty~. 
		\end{align}
		Since the summation of $x_n$ converges, we must have that $x_n$ decreases with $n$ and $x_n \sim o(1/n)$. Thus
		\begin{align}
		\lim\limits_{n\rightarrow \infty} n(z_{n+1}-z_{n})=0~.
		\end{align}
		Together with (\ref{eq:step1}) we have
		\begin{align}
		h(z)< \gamma_--\gamma_+-\mu,
		\end{align}
		Recall that at the beginning we assumed that  $h(z_0)> \gamma_--\gamma_+ -\mu$. Then we have a contradiction
		\begin{align}
		\gamma_--\gamma_+-\mu < h(z_0)< h(z_n)<	h(z)< \gamma_--\gamma_+-\mu~.
		\end{align}
		Therefore we cannot have $\brac{z_n}$ being strictly increasing.
		
		When $h(z_0)> \gamma_--\gamma_+ -\mu$ and $\brac{z_n}$ is strictly decreasing we can follow a similar argument, except the relation
		\begin{align}
		\lim_{n\rightarrow \infty}\frac{\ln z_{n+1}}{\ln z_n}=1
		\end{align}
		does not necessarily hold when $z=0$. Therefore, any solution with strictly decreasing $\brac{z_n}$ and $z>0$ will lead to contradiction.
		Combining the above argument with Lemma \ref{lemma:monotonic} we can conclude the proof.
%		we must have 
%		\begin{align}
%		z_n =z_0 =z =  \gamma_--\gamma_+-\mu~.
%		\end{align}
%		
%		For $h(z_0)\leq \gamma_--\gamma_+ -\mu$, following similar procedure, we will have
%		\begin{align}
%		\gamma_--\gamma_+-\mu \leq h(z)\leq h(z_n)\leq 	h(z_0)\leq \gamma_--\gamma_+-\mu~.
%		\end{align}
%		Again we have $z_n =z_0 =z =  \gamma_--\gamma_+-\mu$.
	\end{proof}

As we can see, other than the thermal state (geometric solution), another Fock passive state with super-exponential decreasing probability distribution is also a valid solution. Mathematically, the inability to eliminate this Fock-diagonal state as a valid solution lies at the fact that \eqref{eq:ratio-limit} does not hold for this particular state. Notice that when $\gamma_+ = 0$, for pure-loss channel, we do not have this problem at all.  That is saying, the key is that the proof is not working for quantum-limited amplifier. It is unclear for us what is the physical implication of this difference between the pure-loss channel and the quantum-limited amplifier channel.

\section{Extending the proof to single-mode gauge-contravariant BGCs}\label{sec:contra}
Here we show that  Conjecture \ref{th:finite-conjecture} holding for all single-mode gauge-covariant channels implies that it holds for all single-mode gauge contravariant channels. The transformation of the characteristic function induced by a contravariant channel $\tilde{\A}_\kappa^N $ is given by
\begin{align}\label{eq:contra-charateristic}
\chi_{\rho_{\rm out}}(\xi) = \chi_{\rho_{\rm in}}(-\sqrt{|\tau|}\xi^*)\exp[-y|\xi|^2/2],
\end{align}
with $\tau=-(\kappa-1)$ and $y\geq |\tau-1|$. The channel is quantum-noise limited when $y = |\tau-1|$. Let us first consider the quantum-limited gauge-contravariant channel $\tilde{\A}_\kappa^0$, which is an entanglement-breaking channel and thus can be expressed as a measure-prepare mapping in the coherent-state basis\cite{giovannetti2014ultimate},
\begin{align}
\tilde{A}_{\kappa}^0 (\rho) = \int d^2\xi \frac{\bra{\xi}\rho\ket{\xi}}{\pi}\ket{-\sqrt{\kappa}\xi^*}\bra{-\sqrt{\kappa}\xi^*}~.
\end{align}
It is instructive to note here that $\frac{\bra{\xi}\rho\ket{\xi}}{\pi}$ is the probability distribution resulting from an ideal heterodyne detection measurement on $\rho$ with measurement result $\xi \in {\mathbb C}$. So, the action of the channel $\tilde{\A}_\kappa^0$ can be regarded as heterodyne detection of $\rho$, obtaining measurement result $\xi$, followed by a conditional preparation of the coherent state $|-\sqrt{\kappa} \xi^* \rangle$. 

It is known that there exists an entanglement-breaking channel $\M$ \cite{giovannetti2015solution}
\begin{align}
\M (\rho) = \int d^2\xi \frac{\bra{\xi}\rho\ket{\xi}}{\pi}\ket{-\sqrt{\kappa}\xi}\bra{-\sqrt{\kappa}\xi}~,
\end{align}
which is a gauge covariant channel with parameters $\tau = \kappa-1$ and $y=\kappa$ as stated in (\ref{eq:characteristic}), such that
\begin{align}
\tilde{\A}_\kappa^0(\rho) = T\M(\rho)~.
\end{align}
In the above, $T$ is the super operator corresponding to complex conjugation, i.e., $T\hat{a}=\hat{a}^\dagger$ \cite{giovannetti2015solution}.
$T$ is not a TPCP map and hence not a valid quantum channel, but it is trace-preserving. Therefore $\tilde{\A}_\kappa^0(\rho)$ and $\M(\rho)$ must have the same spectrum. So, there must exist a unitary $U$ such that,
\begin{align}
\tilde{\A}_\kappa^0(\rho) =U^\dagger\M(\rho)U,
\end{align}
and thus,
\begin{align}
\min_{S(\rho)=S_0}S(\A_\kappa^0(\rho)) = \min_{S(\rho)=S_0}S(\M(\rho))~.
\end{align}
Thus Conjecture \ref{th:finite-conjecture} holds  for the quantum-noise-limited gauge-contravariant channel
if it holds for single-mode, gauge-covariant bosonic Gaussian channels. 

What remains is to consider a non-quantum-noise limited gauge-contravariant BGC (with thermal noise). A contravariant channel with parameter $(\tau,y)$ as in (\ref{eq:contra-charateristic}) can be decomposed into a quantum-limited attenuator (pure-loss) channel followed by a quantum-limited gauge-contravariant channel,
\begin{align}
\tilde{\A} = \tilde{\A}_{\kappa}^0\circ\E_\eta^0~,
\end{align}
with $\tau =\eta(1-\kappa)$ and $y =(\kappa-1)(1-\eta)+\kappa$.
Applying Conjecture~\ref{th:finite-conjecture} for the pure-loss channel $\E_\eta^0$ (which is a gauge-covariant channel) we have
\begin{align}
S(\E_\eta^0(\rho)) \geq S(\E_\eta^0(\rho^{\operatorname{th}}_{g^{-1}(S_0)}))\equiv S_1~.
\end{align}
However, since we proved above that Conjecture \ref{th:finite-conjecture} holds for the quantum-noise-limited gauge-contravariant channel $\tilde{\A}_\kappa^0$
if it holds for single-mode, gauge-covariant bosonic Gaussian channels, we have that
\begin{align}
\nonumber
\min_{S(\rho)=S_0}S(\tilde{A}(\rho))
&=\min_{S(\rho)=S_0}
S(\tilde{\A}_{\kappa}^0\circ\E_\eta^0(\rho))\\
\nonumber
&=\min_{S(\phi)\geq  S_1} S(\tilde{\A}_{\kappa}^0(\phi)) \\
\nonumber
&=S(\tilde{\A}_{\kappa}^0(\phi^{\operatorname{th}}_{g^{-1}(S_1)}))\\
\nonumber
& = S(\tilde{\A}_{\kappa}^0\circ\E_\eta^0(\rho^{\operatorname{th}}_{g^{-1}(S_0)}))\\
& = S(\tilde{A}(\rho^{\operatorname{th}}_{g^{-1}(S_0)}))~.
\end{align}
Therefore we have proved that thermal states minimize the output entropy of all single-mode phase-insensitive (both gauge-covariant and gauge-contravariant) bosonic Gaussian channels, if Conjecture \ref{th:finite-conjecture} is true.

\section{Conclusion}\label{sec:conclusions}
It has been long conjectured that subject to a lower bound on the von Neumann entropy of the input state to a bosonic Gaussian channel (BGC), a thermal state input---one that has a circularly-symmetric, complex Gaussian distribution in phase space---minimizes the entropy at the output of the channel. This conjecture is intimately tied to quantifying the quantum-limited capacities and capacity regions for transmitting classical information over point-to-point and multi-user-broadcast bosonic channels. Various special cases of this conjecture, bounds, and related theorems have been proved in recent years~\cite{giovannetti2014ultimate,mari2014quantum,giovannetti2010generalized,giovannetti2015solution,de2015passive,de2016gaussian}.

In this note, we discuss a proof attempt of the aforesaid Gaussian optimizer conjecture for all single-mode phase-insensitive BGCs (both gauge-covariant and gauge-contravariant). Nearly all practical optical communication channels fall under the class of phase-insensitive BGCs, which is a model general enough to model linear loss (attenuation), additive noise, linear amplification, phase conjugation, and any combinations thereof. As discussed in \cite{QW16}, the truth of Conjecture~\ref{thm:mainresult} implies that trade-off and broadcast capacities of quantum-limited amplifier channels will be solved.

Another open question is the proof of the Gaussian optimizer conjecture for multimode phase-insensitive BGCs, which would imply that with a constraint on the input entropy of the joint state $\rho_{A^n}$ sent over $n$ independent uses of a phase-insensitive BGC ${\cal N}_{A \to B}$, i.e., with $S(\rho_{A^n}) = S_0$, the tensor-product thermal state $\left[\rho^{\rm th}_{\bar n}\right]^{\otimes n}$ minimizes the entropy of the joint $n$-channel-use output state $\rho_{B^n} = \left[{\cal N}_{A \to B}\right]^{\otimes n}\left(\rho_{A^n}\right)$, where ${\bar n} = g^{-1}(S_0/n)$. The proof of the above multi-mode version of Conjecture~\ref{thm:mainresult} is a special case of the entropy photon-number inequality (EPnI)~\cite{guhaEPnI2008,guha2008entropy} and will suffice for all known applications of the EPnI. In particular, the aforesaid multimode version of Conjecture~\ref{th:finite-conjecture} will close the capacity converse proofs of the single-sender multiple-receiver bosonic broadcast channel with loss and thermal noise~\cite{guha2008multiple}, and the triple-tradeoff region of the pure-loss bosonic channel~\cite{2012_wilde_bosonictripletradeoff}. Finally, whether or not the Gaussian-optimizer conjecture holds true for phase-sensitive bosonic Gaussian channels (e.g., a phase-sensitive amplifier, or a squeezing transformation) also remains an open question.

\section*{acknowledgements}
We are very grateful to Giacomo de Palma for pointing out the main problem with our previously claimed proof of Conjecture~\ref{thm:mainresult} in the original arXiv posting of this note.

This work was performed while Haoyu Qi was visiting the Quantum Information Processing (QuIP) group at Raytheon BBN Technologies, during the summer of 2016. HQ is supported by the Air Force Office of Scientific Research, the Army Research Office and the National Science Foundation. HQ thanks Jonathan P.~Dowling and Hwang Lee for their help and support of his visiting BBN. SG thanks Jeffrey H. Shapiro, MIT, for valuable discussions.\footnote{Shapiro attempted a direct proof of Conjecture~\ref{th:finite-conjecture} for a specific gauge-covariant BGC, the pure-loss channel, by writing down a Lagrange-multiplier optimization of the output state for a Fock-diagonal input state, from which he saw numerical evidence that the thermal input achieves the minimum output entropy~\cite{jeff_personal}. This special case (pure-loss channel) however was proved subsequently in~\cite{de2016gaussian}.} MMW acknowledges support from the NSF under Award No.~1350397.

This document does not contain technology or technical data controlled under either the U.S.~International Traffic in Arms Regulations or the U.S.~Export Administration Regulations.

\bibliographystyle{IEEEtran}
\bibliography{IEEEabrv,OneModConj}

% Generated by IEEEtran.bst, version: 1.13 (2008/09/30)
\begin{thebibliography}{10}
\providecommand{\url}[1]{#1}
\csname url@samestyle\endcsname
\providecommand{\newblock}{\relax}
\providecommand{\bibinfo}[2]{#2}
\providecommand{\BIBentrySTDinterwordspacing}{\spaceskip=0pt\relax}
\providecommand{\BIBentryALTinterwordstretchfactor}{4}
\providecommand{\BIBentryALTinterwordspacing}{\spaceskip=\fontdimen2\font plus
\BIBentryALTinterwordstretchfactor\fontdimen3\font minus
  \fontdimen4\font\relax}
\providecommand{\BIBforeignlanguage}[2]{{%
\expandafter\ifx\csname l@#1\endcsname\relax
\typeout{** WARNING: IEEEtran.bst: No hyphenation pattern has been}%
\typeout{** loaded for the language `#1'. Using the pattern for}%
\typeout{** the default language instead.}%
\else
\language=\csname l@#1\endcsname
\fi
#2}}
\providecommand{\BIBdecl}{\relax}
\BIBdecl

\bibitem{giovannetti2014ultimate}
V.~Giovannetti, R.~Garc{\'\i}a-Patr{\'o}n, N.~J. Cerf, and A.~S. Holevo,
  ``Ultimate classical communication rates of quantum optical channels,''
  \emph{Nature Photonics}, vol.~8, no.~10, pp. 796--800, 2014.

\bibitem{mari2014quantum}
A.~Mari, V.~Giovannetti, and A.~S. Holevo, ``Quantum state majorization at the
  output of bosonic {Gaussian} channels,'' \emph{Nature Communications},
  vol.~5, p. 3826, 2014.

\bibitem{giovannetti2010generalized}
V.~Giovannetti, A.~S. Holevo, S.~Lloyd, and L.~Maccone, ``Generalized minimal
  output entropy conjecture for one-mode {Gaussian} channels: definitions and
  some exact results,'' \emph{Journal of Physics A: Mathematical and
  Theoretical}, vol.~43, no.~41, p. 415305, 2010.

\bibitem{giovannetti2015solution}
V.~Giovannetti, A.~S. Holevo, and R.~Garc{\'\i}a-Patr{\'o}n, ``A solution of
  {Gaussian} optimizer conjecture for quantum channels,'' \emph{Communications
  in Mathematical Physics}, vol. 334, no.~3, pp. 1553--1571, 2015.

\bibitem{de2015passive}
G.~De~Palma, D.~Trevisan, and V.~Giovannetti, ``Passive states optimize the
  output of bosonic {Gaussian} quantum channels,'' \emph{IEEE Transactions on
  Information Theory}, vol.~62, no.~5, pp. 2895--2906, 2016.

\bibitem{de2016gaussian}
------, ``Gaussian states minimize the output entropy of the one-mode quantum
  attenuator,'' \emph{IEEE Transactions on Information Theory}, vol.~63, no.~1,
  pp. 728--737, January 2017, arXiv:1605.00441.

\bibitem{memarzadeh2016nonGaussian}
L.~Memarzadeh and S.~Mancini, ``Minimum output entropy of a non-{Gaussian}
  quantum channel,'' \emph{Physical Review A}, vol.~94, no.~2, p. 022341,
  August 2016, arXiv:1605.04525.

\bibitem{PTG17}
\BIBentryALTinterwordspacing
G.~De~Palma, D.~Trevisan, and V.~Giovannetti, ``Gaussian states minimize the
  output entropy of one-mode quantum {Gaussian} channels,'' \emph{Physical
  Review Letters}, vol. 118, no.~16, p. 160503, April 2017, arXiv:1610.09970.
  [Online]. Available:
  \url{https://link.aps.org/doi/10.1103/PhysRevLett.118.160503}
\BIBentrySTDinterwordspacing

\bibitem{bell1948shannon}
C.~E. Shannon, ``A mathematical theory of communication,'' \emph{Bell System
  Technical Journal}, vol.~27, pp. 379--423, 1948.

\bibitem{guhaEPnI2008}
S.~Guha, J.~H. Shapiro, and B.~I. Erkmen, ``Capacity of the bosonic wiretap
  channel and the entropy photon-number inequality,'' in \emph{Proceedings of
  the Internaional Symposium on Information Theory (ISIT)}.\hskip 1em plus
  0.5em minus 0.4em\relax IEEE, 2008, pp. 91--95.

\bibitem{guha2008entropy}
S.~Guha, B.~I. Erkmen, and J.~H. Shapiro, ``The entropy photon-number
  inequality and its consequences,'' in \emph{Information Theory and
  Applications Workshop, 2008}.\hskip 1em plus 0.5em minus 0.4em\relax IEEE,
  2008, pp. 128--130.

\bibitem{guha2008multiple}
S.~Guha, ``Multiple-user quantum information theory for optical communication
  channels,'' Ph.D. dissertation, MIT, 2008.

\bibitem{giovannetti2004minimum}
V.~Giovannetti, S.~Guha, S.~Lloyd, L.~Maccone, and J.~H. Shapiro, ``Minimum
  output entropy of bosonic channels: a conjecture,'' \emph{Physical Review A},
  vol.~70, no.~3, p. 032315, 2004.

\bibitem{guha2007classical}
S.~Guha, J.~H. Shapiro, and B.~I. Erkmen, ``Classical capacity of bosonic
  broadcast communication and a minimum output entropy conjecture,''
  \emph{Physical Review A}, vol.~76, no.~3, p. 032303, 2007.

\bibitem{2012_wilde_bosonictripletradeoff}
M.~M. Wilde, P.~Hayden, and S.~Guha, ``Information trade-offs for optical
  quantum communication,'' \emph{Physical Review Letters}, vol. 108, no.
  140501, 2012.

\bibitem{QW16}
H.~Qi and M.~M. Wilde, ``Capacities of quantum amplifier channels,''
  \emph{Physical Review A}, vol.~95, no.~1, p. 012339, January 2017,
  arXiv:1605.04922.

\bibitem{jeff_personal}
J.~H. Shapiro, personal communications, 2014.

\end{thebibliography}
\end{document}